\algnewcommand{\Initialize}{%
  \State \textbf{Initialize:}
}
\newacronym{sgd}{SGD}{stochastic gradient descent}
\newacronym{gd}{GD}{gradient descent}
\newacronym{cdf}{CDF}{cumulative distribution function}
\newcommand{\br}[1]{\ensuremath{\left(#1\right)}}
\DeclareMathOperator*{\argmax}{arg\,max}
\DeclareMathOperator*{\argmin}{arg\,min}
\newcommand{\define}{\ensuremath{\coloneqq}}
\newcommand{\samplesize}{\ensuremath{d}}
\newcommand{\samplecnt}{\ensuremath{v}}
\newcommand{\sampleidx}{\ensuremath{l}}
\newcommand{\samplevec}{\ensuremath{\mathbf{x}_\sampleidx}}
\newcommand{\samplematrix}{\ensuremath{\mathbf{X}}}
\newcommand{\workermatrix}[1][\armidx]{\ensuremath{\samplematrix_{#1}}}
\newcommand{\samplelabel}{\ensuremath{y_\sampleidx}}
\newcommand{\samplelabelvec}{\ensuremath{\mathbf{y}}}
\newcommand{\model}{\ensuremath{\mathbf{w}}}
\newcommand{\modelopt}{\ensuremath{\model_\star}}
\newcommand{\initvec}{\ensuremath{\model_0}}
\newcommand{\lossfct}[1][]{\ensuremath{F(\samplematrix, \samplelabelvec, \model_{#1})}}
\newcommand{\lossfctsample}[1][]{\ensuremath{F(\samplevec, \samplelabel, \model_{#1})}}
\newcommand{\lossfctiter}[1][\iter]{\ensuremath{F(\batchcnt, \batchscale, #1)}}
\newcommand{\lr}{\ensuremath{\eta}}
\newcommand{\lipschitz}{\ensuremath{L}}
\newcommand{\gradvar}{\ensuremath{\sigma^2_{\text{grad}}}}
\newcommand{\convexity}{\ensuremath{c}}
\newcommand{\batchsize}{\ensuremath{s}}
\newcommand{\batchcnt}{\ensuremath{k}}
\newcommand{\batchscale}{\ensuremath{\beta}}
\newcommand{\stage}{\ensuremath{\tau}}
\newcommand{\workeridx}{\ensuremath{i}}
\newcommand{\comm}{\ensuremath{X}}
\newcommand{\comp}{\ensuremath{Y}}
\newcommand{\comb}{\ensuremath{Z}}
\newcommand{\rvcomm}{\ensuremath{\comm_{\workeridx}}}
\newcommand{\rvcomp}{\ensuremath{\comp_{\workeridx}}}
\newcommand{\rvcomb}{\ensuremath{\comb_{\workeridx}}}
\newcommand{\constcomm}{\ensuremath{x}}
\newcommand{\constcomp}{\ensuremath{y\batchscale}}
\newcommand{\constcompnoscale}{\ensuremath{y}}
\newcommand{\commrate}{\ensuremath{\lambda_{x}}}
\newcommand{\comprate}{\ensuremath{\lambda_{y}/\batchscale}}
\newcommand{\compratenoscale}{\ensuremath{\lambda_{y}}}
\newcommand{\waitcnt}{\ensuremath{\batchcnt}}
\newcommand{\waitidx}{\ensuremath{j}}
\newcommand{\clustersize}{\ensuremath{\workercnt}}
\newcommand{\tmpa}{\ensuremath{\rho}}
\newcommand{\tmpb}{\ensuremath{\tau}}
\newcommand{\tmpc}{\ensuremath{\xi}}
\newcommand{\muk}{\ensuremath{\mu_{\waitcnt:\clustersize}(\batchscale)}}
\newcommand{\tc}{\ensuremath{t_\stage}}
\newcommand{\tp}{\ensuremath{t_{\stage-1}}}
\newcommand{\kc}{\ensuremath{\batchcnt_\stage}}
\newcommand{\kn}{\ensuremath{\batchcnt_{\stage+1}}}
\newcommand{\betac}{\ensuremath{\batchscale_\stage}}
\newcommand{\betan}{\ensuremath{\batchscale_{\stage+1}}}
\newcommand{\mukcur}{\ensuremath{\mu_{\stage}}}
\newcommand{\muknext}{\ensuremath{\mu_{\stage+1}}}
\newcommand{\mukcurone}{\ensuremath{\mu_{\stage}^{(1)}}}
\newcommand{\muknextone}{\ensuremath{\mu_{\stage+1}^{(1)}}}
\newcommand{\mukcurprime}{\ensuremath{\frac{\partial \mukcur}{\partial \betac}}}
\newcommand{\muknextprime}{\ensuremath{\frac{\partial \muknext}{\partial \betan}}}
\newcommand{\mukcurprimeone}{\ensuremath{\frac{\partial \mu_{\stage}^{(1)}}{\partial \betac}}}
\newcommand{\muknextprimeone}{\ensuremath{\frac{\partial \mu_{\stage+1}^{(1)}}{\partial \betan}}}
\newcommand{\workercnt}{\ensuremath{n}}
\newcommand{\timevar}{\ensuremath{t}}
\newcommand{\iter}{\ensuremath{j}}
\newcommand{\expdev}[1][\iter]{\ensuremath{E\br{\batchcnt, \batchscale, #1}}}
\newcommand{\expdevk}[1][\batchcnt]{\ensuremath{E\br{#1, \batchscale, \iter}}}
\newcommand{\itersamples}{\ensuremath{\mathcal{S}_\iter}}
\newtheorem{theorem}{Theorem}
\newtheorem{definition}{Definition}
\newtheorem{corollary}[theorem]{Corollary}
\newtheorem{proposition}[theorem]{Proposition}
\begin{document}

\title{Fast and Straggler-Tolerant Distributed SGD\\ with Reduced Computation Load\vspace{-.2cm}} 

\author{%
  \IEEEauthorblockN{Anonymous Authors}
}

\author{%
  \IEEEauthorblockN{\textbf{Maximilian Egger}\IEEEauthorrefmark{1},
                    \textbf{Serge Kas Hanna}\IEEEauthorrefmark{2},
                    and \textbf{Rawad Bitar}\IEEEauthorrefmark{1}}
                    
  \IEEEauthorblockA{\IEEEauthorrefmark{1}%
                    Technical University of Munich, Munich,
                    Germany,
                    \{maximilian.egger, rawad.bitar\}@tum.de}
                    
  \IEEEauthorblockA{\IEEEauthorrefmark{2}%
                    Department of Mathematics and Systems Analysis, School of Science, Aalto University,
                    Finland,
                    serge.kashanna@aalto.fi}
                     
\thanks{This project has received funding from the German Research Foundation (DFG) under Grant Agreement Nos. BI 2492/1-1 and WA 3907/7-1.} \vspace{-.5cm}
}

\maketitle

\begin{abstract}
In distributed machine learning, a central node outsources computationally expensive calculations to external worker nodes. The properties of optimization procedures like stochastic gradient descent (SGD) can be leveraged to mitigate the effect of unresponsive or slow workers called stragglers, that otherwise degrade the benefit of outsourcing the computation. This can be done by only waiting for a subset of the workers to finish their computation at each iteration of the algorithm. Previous works proposed to adapt the number of workers to wait for as the algorithm evolves to optimize the speed of convergence. In contrast, we model the communication and computation times using independent random variables. Considering this model, we construct a novel scheme that adapts both the number of workers and the computation load throughout the run-time of the algorithm. Consequently, we improve the convergence speed of distributed SGD while significantly reducing the computation load, at the expense of a slight increase in communication load.

\end{abstract}
\section{Introduction}
Tremendous amount of data is generated daily. Processing and analyzing this data requires distributing the computations to several nodes. Different forms of distributed computing are being explored. The two main settings are the main node/worker setting and the federated learning setting. In federated learning, clients generate their own data and work collaboratively with the help of a federator to train a machine learning model on their collective data. We refer the interested reader to \cite{kairouz2021advances} and references within for more details.

In this work, we focus on the main node/worker setting. The main node possesses a tremendous amount of data and wants to run a gradient descent-based machine learning algorithm on this data. To speed up the computation process, the main node wants to offload the computation to $n$ workers who are assigned smaller computation tasks that are run in parallel.

The main challenge of distributed computing is tolerating stragglers, i.e., slow or unresponsive workers, whose presence may delay the process \cite{DB13}. In addition to straggler tolerance, privacy of the main node's data and security against malicious workers trying to corrupt the computations is considered in the literature, e.g., \cite{ulukus2022private,d2020gasp,hofmeister2022secure,li2020coded} and references within. In this work, we focus solely on straggler-tolerance.%

To understand the effect of stragglers, consider the naive way of distributing the computation. The main node partitions its data and sends one part to each worker who runs a local computation and sends their result back to the main node. The main node needs to aggregate \emph{all} the workers' results to obtain the desired computation. This naive strategy assigns the smallest computation possible to each worker. However, it requires the main node to wait for the slowest worker which may outweigh the benefit of distributing the computations \cite{DB13}. The effect of stragglers can be mitigated by assigning redundant computation tasks to the workers, e.g., \cite{TLDK17,YA18,raviv2017gradient,lee2018speeding,ferdinand2018anytime,yu2018lagrange,kiani2018exploitation,chen2018draco,karakus2017straggler,halbawi2017improving,DCG16,KS18,fahim2017optimal,BPR17,Amiri2019,Ozfatura2020,kadhe2020communication} and waiting for a subset of them to send their results to compute the exact gradient. Nevertheless, adding redundancy increases the workload, hence the computation time, of each worker. Computing an estimate of the gradient in the presence of stragglers is considered in \cite{charles2017approximate,MRM18,bitar2020stochastic,wang2019erasurehead,wang2019fundamental,glasgow2021approximate,Ozfatura2021,sarmasarkar2021gradient}. The quality of the estimate of the gradient, and thus the convergence rate, depends on the number of stragglers and the chosen redundancy.

In distributed gradient descent (GD), mitigating the stragglers is as simple as ignoring\footnote{This holds true if the workers' data is generated from the same distribution.} them \cite{CPMBJ16}. Under some assumptions that we explain in the sequel, ignoring the stragglers is equivalent to the well-known mini-batch stochastic gradient descent (SGD) algorithm \cite{RM51}. SGD is a non-distributed relaxation of GD in which the main node is required to run the gradient descent algorithm on a subset (called batch) of the data. SGD and GD are iterative algorithms\footnote{Readers who are not familiar with GD algorithms are referred to Section~\ref{sec:prelim}.}. In a non-distributed setting, SGD is introduced to reduce the amount of computations required per iteration at the expense of requiring more iterations to \emph{converge}, i.e., to reach a desired result. The number of iterations required for SGD to converge is inversely proportional to the size of the batch on which the gradient is computed at every iteration. Murata \cite{Murata} showed that for a fixed learning rate, single-node SGD algorithms go through a transient phase and a stationary phase. In the former, the output of the algorithm approaches the optimal result exponentially fast in the number of iterations. Whereas, in the latter the output of the algorithm oscillates around the optimal result. Using a decreasing step size over the iterations \cite{pesme2020convergence,Ge2019} avoids oscillating around the optimal result. This, however, leads to a long transient phase and hence a lower convergence rate. The authors of \cite{Smith2018} show that it is beneficial to increase the mini-batch size instead of decreasing the learning rate.

In distributed SGD, a similar trade-off holds. 
Ignoring more workers results in a faster iteration, but gives a smaller effective batch size (hence requires more iterations). The works of \cite{CPMBJ16,Dutta2018,KasHanna2020,hanna2022adaptive} are the closest to our work. In \cite{CPMBJ16}, the authors study the effect of ignoring a small fraction of the workers and analyze the total time required by distributed SGD to converge. They show that there is an optimal number of workers to ignore at each iteration, equivalently an optimal number of tolerated stragglers, that is a fixed parameter throughout the algorithm. We shall call this strategy the \emph{fastest-$k$} strategy. The name stems from waiting for $k$ workers out of $n\geq k$ available ones, which reduces the overall run-time. In \cite{KasHanna2020}, the authors show that it is beneficial to adaptively choose the value of $\batchcnt$ that optimizes the convergence rate of the machine learning algorithm to further reduce the overall run-time. The intuition is that the main node kick-starts the algorithm on small batches. When a larger batch size is needed to improve the accuracy of the model, the main node decreases the straggler tolerance. We refer to this strategy as \emph{adaptive-$k$} strategy.

For adaptive SGD algorithms, heuristics are crucial to detect the stationary phase, i.e., when the error floor is reached. The authors of \cite{KasHanna2020} proposed to use Pflug's method \cite{Pflug1990}. This approach performs reasonably well for specific tasks and parameter ranges, but is sensitive to the learning rate \cite{Chee2018,pesme2020convergence}. The authors of \cite{Chee2018} proposed to use implicit updates for more robustness. Alternative criteria are considered in \cite{pesme2020convergence}, which we slightly modify and use in our simulations.

In \cite{Dutta2018,KasHanna2020} the response time of a worker is modeled by one random variable that captures the time spent by the worker on the computation and on communicating the result to the main node. In contrast, we model the response time as the sum of two independent random variables for communication time and computation time as in \cite{schlegel2022privacy,Amiri2019,dhakal2019coded}. %
We show that in certain regimes, it is beneficial to increase/decrease the computation load of the workers before changing the straggler tolerance. We provide an algorithm with a smaller run-time and lower overall computation effort than the \emph{adaptive-$k$} strategy \cite{KasHanna2020}. We compare our strategy to that of \cite{KasHanna2020}, since in this line of research \emph{adaptive-$k$} outperforms prior works. %

\section{Preliminaries and System Model} \label{sec:prelim}
\textbf{Notation.} Vectors and matrices are denoted by bold lower-case and upper-case letters, respectively. Let $\tmpa$ be an integer, then $[\tmpa] \define \{1, \dots, \tmpa\}$. Let $\tmpb, \tmpc \in \mathbb{R}$, then $[\tmpb, \tmpc]$ is the interval of all reals between and including $\tmpb$ and $\tmpc$.

In this work, we focus on the main node/worker setting, in which a central entity runs an optimization procedure based on input data consisting of multiple samples with respective labels. A sample consists of multiple features. To reduce the computation time required to finish the optimization, computations are distributed to a total of $\workercnt$ worker machines, indexed by $\workeridx \in [\workercnt]$. Let $\samplematrix \in \mathbb{R}^{\samplecnt \times \samplesize}$ be the matrix that contains the $\samplecnt$ row-wise stacked features for every sample $\samplevec \in \mathbb{R}^{\samplesize}$, and let $\samplelabelvec \in \mathbb{R}^{\samplecnt}$ be the vector containing all labels $\samplelabel \in \mathbb{R}$. The main node distributes disjoint subsets of $\samplematrix$ of size $\batchsize = \frac{\samplecnt}{\workercnt}$ to every worker $\workeridx \in [\workercnt]$, denoted by $\workermatrix \in \mathbb{R}^{\batchsize \times \samplesize}$.

We iteratively seek to find a model $\modelopt$ that minimizes an additively separable global objective function $\lossfct \define \sum_{\sampleidx = 1}^{\samplecnt} F(\samplevec, \samplelabel, \model)$. %
To optimize the convergence behavior, we particularly focus on distributed SGD. To mitigate the effect of stragglers, at every iteration the main node only waits for the computations of $\batchcnt$ workers. The batch size $\batchsize$ per worker is adaptively scaled by $\batchscale \in \{\frac{1}{\samplecnt}, \frac{2}{\samplecnt}, \dots, \frac{\samplecnt-1}{\samplecnt}, 1\}$ and the number of workers $\batchcnt$ to be waited for at every iteration. That is, at iteration $\iter$ every worker $\workeridx$ uses a random batch of size $\batchscale \batchsize$ from $\workermatrix$. Let $\itersamples$ denote the sample indices of the $\batchcnt$-fastest workers at iteration $\iter$, the objective (loss) function in iteration $\iter$ then reads as $\lossfctiter \define \frac{1}{\batchcnt \batchscale \batchsize} \sum_{\sampleidx \in \itersamples} F(\samplevec, \samplelabel, \model_\iter)$. With learning rate $\lr$, the model update reads as
$$
    \model_{\iter+1} = \model_{\iter} - \lr \nabla \lossfctiter. %
$$
Consider an optimization algorithm based on SGD. An upper bound for the expected deviation of the actual loss $\lossfctiter$ from the optimal loss $F^\star \define F(\samplematrix, \samplelabelvec, \modelopt)$, further referred to as the error $\expdev \define \mathbb{E}[\lossfctiter - F^\star]$, was derived in \cite{Bottou2018,Dutta2018}. 
Let $\lipschitz$ be the Lipschitz constant, $\gradvar$ an upper bound on the variance of the gradient of the loss function, and $\convexity$ the convexity parameter \cite{Bottou2018}, then we have
\begin{equation}
    \!\!\!\!\!\expdev \! \leq \! \underbrace{\frac{\eta \lipschitz \gradvar}{2 \convexity \batchcnt \batchsize \batchscale}}_{\text{error floor}} \! + \! \underbrace{(1-\eta \convexity)^{\iter} \Big(\! F(\initvec) \! - \! F^\star \! - \! \frac{\eta \lipschitz \gradvar}{2 \convexity \batchcnt \batchsize \batchscale}}_{\text{transient phase}} \! \Big).\!\! \label{eq:convergence}
\end{equation}
Ultimately, we want to optimize the decay of $\expdev$ with respect to time by varying the parameters $\batchscale$ and $\batchcnt$. The time dependency results from the duration of an iteration $\iter$.

Because of the scaling parameter $\batchscale$, the workers' batches can possibly consist of only one sample. If the delay from communication significantly outweighs the computation time due to small task sizes, iterations get costly in terms of convergence over time. For a fair comparison to existing literature, we thus differentiate between communication and computation time. We measure the per iteration cost of communication as $\workercnt+\batchcnt$, i.e., the sum of model vectors communicated to the workers and back to the main node, and the cost of computation as $\batchscale\batchsize$. %
In \cref{def:simple_model}, we model the computation time of the workers as independently and identically distributed (i.i.d.) exponentially distributed random variables with common rate \cite{Liang2014,Lee2016}. The communication time is modeled as an equal constant for all workers.
This model mainly serves to provide tractable and accessible theoretical results in the sequel.
\begin{definition}[Simplified Model] \label{def:simple_model}
The response time for worker $\workeridx$ is modeled by random variable $\rvcomb = \rvcomm + \rvcomp$, where the communication time $\rvcomm = \constcomm = \text{const.}$ and the computation time $\rvcomp$ is modeled by a scaled and shifted exponential random variable, i.e., $\rvcomp \sim \exp(\comprate) + \constcompnoscale$, where $\constcompnoscale = \text{const.}$
\end{definition}

We denote by $\mu_{\batchcnt:\workercnt}$ the expectation of the $\batchcnt$-th order statistics of $\workercnt$ i.i.d. random variables $\rvcomb$, where $\workeridx \in [\workercnt]$.

\begin{proposition} \label{prop:renyi}
The model in \cref{def:simple_model} results in the well-known order statistics for $\batchcnt$ out of $\workercnt$ workers responding \cite{Renyi1953-vo}:
\begin{equation*}
    \mu_{\batchcnt:\workercnt}^{(1)}(\beta) = \frac{\batchscale}{\compratenoscale} \sum_{j=\workercnt-\batchcnt+1}^{\workercnt}  \frac{1}{j} + \constcompnoscale + \constcomm.
\end{equation*}
\end{proposition}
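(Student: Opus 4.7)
The plan is to reduce the statement to the classical Rényi representation of exponential order statistics by peeling off the constant terms in $Z_i$. Since $X_i = x$ and the additive shift $y$ are deterministic, the response time decomposes as $Z_i = x + y + E_i$, where $E_i \sim \exp(\lambda_y/\beta)$ are i.i.d. exponential. Order statistics are monotone and shift-equivariant, so the $k$-th smallest of the $Z_i$'s equals $x + y + E_{(k)}$, where $E_{(k)}$ denotes the $k$-th order statistic of the $E_i$'s. By linearity of expectation, it therefore suffices to establish
\begin{equation*}
    \mathbb{E}\!\left[E_{(k)}\right] = \frac{\beta}{\lambda_y}\sum_{j=n-k+1}^{n}\frac{1}{j}.
\end{equation*}

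For this, I would invoke Rényi's representation: for $n$ i.i.d.\ $\exp(\lambda)$ random variables, the spacings $E_{(1)}, E_{(2)} - E_{(1)}, \ldots, E_{(n)} - E_{(n-1)}$ are independent and exponentially distributed with rates $n\lambda, (n-1)\lambda, \ldots, \lambda$, respectively. A telescoping sum then gives
\begin{equation*}
    \mathbb{E}\!\left[E_{(k)}\right] = \sum_{i=1}^{k}\mathbb{E}\!\left[E_{(i)}-E_{(i-1)}\right] = \sum_{i=1}^{k}\frac{1}{(n-i+1)\lambda} = \frac{1}{\lambda}\sum_{j=n-k+1}^{n}\frac{1}{j},
\end{equation*}
with the convention $E_{(0)} = 0$. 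Substituting $\lambda = \lambda_y/\beta$ yields the exponential contribution claimed in the proposition.

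Adding back the deterministic offsets $x$ and $y$ finishes the derivation, producing exactly the expression in the statement. There is no real obstacle here; the only point requiring a brief justification is the shift/scale invariance that lets us pull the constants outside the order statistic, plus a citation or one-line recap of Rényi's representation for completeness. If one preferred a direct route, one could alternatively differentiate the CDF $\Pr\{E_{(k)} \le t\} = \sum_{m=k}^{n}\binom{n}{m}(1-e^{-\lambda t})^{m}e^{-(n-m)\lambda t}$ and integrate, but the spacings argument is far cleaner and avoids any combinatorial calculation.
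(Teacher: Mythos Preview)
Your proposal is correct and matches the paper's approach: the paper does not spell out a proof but simply cites R\'enyi's 1953 result on exponential order statistics, and your argument via the R\'enyi spacings representation is precisely the standard derivation behind that citation, with the deterministic shifts $x$ and $y$ handled by the obvious equivariance step.
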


More generally, we model in \cref{def:general_model} computation and communication as independent shifted exponential random variables. %
To give directions how to run the proposed algorithm under this model, we provide the order statistics in \cref{thm:order_stats}.
\begin{definition}[Generalized Model] \label{def:general_model}
The response time for worker $\workeridx$ is modeled by $\rvcomb = \rvcomm + \rvcomp$, where the communication time $\rvcomm \sim \constcomm + \exp(\commrate)$, with $\constcomm = \text{const.}$, and the computation time $\rvcomp \sim \constcomp + \exp(\comprate)$, with $\constcompnoscale = \text{const.}$
\end{definition}

\section{Strategy and Main Results} \label{sec:strategy}

To optimize the speed of convergence, we group iterations into multiple stages $\stage \in \mathbb{N}$, where each stage is characterized by a certain number $\batchcnt_\stage$ of workers to wait for and a certain scale parameter $\batchscale_\stage$ for the batch size. We declare a new stage upon changing $\batchcnt$ and/or $\batchscale$. The time when switching from stage $\stage$ to $\stage+1$ is denoted by $\tc$, where $t_0 \define 0$. Consequently, the duration of a stage is $\tc-\tp$. Knowing the previous switching time $\tp$ and the parameters for the next stage, i.e., $\kn$ and $\betan$, we follow the steps of \cite{KasHanna2020} to determine the optimal switching times in theory, and by this to analyze possible gains of the proposed scheme. In contrast to \cite{KasHanna2020}, the order statistics and the error floors now also depend on $\beta$. The result is given in \cref{thm:sw_times}. For ease of notation, we omit the dependencies on $\batchcnt_\stage$ and $\batchscale_\stage$ and refer to the order statistics $\mu_{\kc:\workercnt}(\betac)$ as $\mukcur$. We define $\phi_\stage \define \batchcnt_\stage \batchscale_\stage$ to reflect the batch size in stage $\stage$.

\begin{theorem} \label{thm:sw_times}
Given $\tp$, the switching time $\tc$ is given by
\begin{align*}
    \tc &= \tp + \frac{\mukcur}{\alpha} \big(- \log(\mukcur \lr \lipschitz \gradvar (\phi_{\stage+1} - \phi_\stage)) \nonumber \\ & +\log(\muknext \! - \! \mukcur) + \log(\phi_{\stage+1}(2\convexity \phi_\stage  \batchsize e(\tp) \! - \! \lr \lipschitz \gradvar)) \big),
\end{align*} where $\alpha = -\log(1-\lr\convexity)$.
\begin{proof}
The proof follows similar steps as in \cite{KasHanna2020} and is given in \cref{app:proof_sw_times} for completeness.
\end{proof}

\end{theorem}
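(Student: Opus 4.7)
The plan is to mimic the switching-time derivation of \cite{KasHanna2020}, but carry the dependence on $\betac,\betan$ through both the per-iteration duration $\mukcur,\muknext$ and the error floor. First I would write the continuous-time relaxation of the error recursion~\eqref{eq:convergence} during stage $\stage$. Setting $E_\stage \define \frac{\eta L \sigma^2}{2c\phi_\stage s}$ and using $(1-\eta c)^j = e^{-\alpha j}$ together with the fact that during stage $\stage$ one iteration costs $\mukcur$, one obtains
\begin{equation*}
    e(t) \;=\; E_\stage + e^{-\alpha(t-\tp)/\mukcur}\bigl(e(\tp)-E_\stage\bigr),\qquad t\ge \tp,
\end{equation*}
so that $\tfrac{d}{dt}e(t) = -\tfrac{\alpha}{\mukcur}\bigl(e(t)-E_\stage\bigr)$.

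Next I would apply the \emph{equal-rate} switching criterion from \cite{KasHanna2020}: the optimal switching time $\tc$ is the instant at which the instantaneous rate of error decrease using the parameters $(\kc,\betac)$ coincides with the rate that would be realized if one switched to $(\kn,\betan)$ starting from the same error $e(\tc)$. Writing this as
\begin{equation*}
    \frac{1}{\mukcur}\bigl(e(\tc)-E_\stage\bigr) \;=\; \frac{1}{\muknext}\bigl(e(\tc)-E_{\stage+1}\bigr),
\end{equation*}
and solving the linear equation in $e(\tc)$ yields
\begin{equation*}
    e(\tc)-E_\stage \;=\; \frac{\mukcur\bigl(E_\stage-E_{\stage+1}\bigr)}{\muknext-\mukcur}.
\end{equation*}

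Finally I would invert the error evolution to express $\tc$ in closed form:
\begin{equation*}
    \tc \;=\; \tp + \frac{\mukcur}{\alpha}\log\!\left(\frac{e(\tp)-E_\stage}{e(\tc)-E_\stage}\right),
\end{equation*}
substitute the expression for $e(\tc)-E_\stage$, and simplify using $E_\stage-E_{\stage+1} = \frac{\eta L \sigma^2(\phi_{\stage+1}-\phi_\stage)}{2cs\phi_\stage\phi_{\stage+1}}$ and $e(\tp)-E_\stage = \frac{2c\phi_\stage s\, e(\tp) - \eta L \sigma^2}{2c\phi_\stage s}$. Collecting the three resulting logarithms should match the claimed formula. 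The main obstacle I anticipate is purely bookkeeping: verifying that the factor $\phi_\stage s$ cancels correctly in the ratio $(e(\tp)-E_\stage)/(E_\stage-E_{\stage+1})$ so that only $\phi_{\stage+1}$ survives inside the last logarithm, and in particular that the derivation from \cite{KasHanna2020} remains valid once $\beta$ is also allowed to vary, since $\beta$ now affects both $\mu$ and $E$. Provided the equal-rate criterion is still the correct local optimality condition (which is immediate from the same convex/monotone argument as in the $\beta\equiv 1$ case), the algebra goes through unchanged and yields the stated expression.
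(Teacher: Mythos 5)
Your proposal is correct and follows essentially the same route as the paper's proof: both express the error decay in continuous time via $j=(t-\tp)/\mukcur$, impose the rate-matching criterion \eqref{eq:cond} at $t=\tc$ (your equal-rate condition is exactly its boundary case), solve for $e(\tc)-E_\stage$, and invert the exponential decay, with the same bookkeeping yielding the three logarithms.
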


The question remaining is how to choose $\kn$ and $\betan$ depending on $\kc$ and $\betac$, which we answer in the following. %
The expectation of the order statistics is minimal when $\batchscale = \frac{1}{\batchsize}$ (i.e., only one sample is used) and $\batchcnt = 1$ (i.e., the main node waits for only one worker). The speed of convergence mainly depends on two factors: i) the order statistics of the workers' response times that depends on $\batchcnt$ and $\batchscale$, and ii) the error floor influenced by the product $\batchcnt \batchscale$.

As the exponential distribution is highly concentrated on the left, increasing the batch size is better than increasing the number of workers to wait for. This can be seen in \cref{prop:renyi}, where choosing $\kn=\kc+1$ to result in $\phi_{\stage+1}>\phi_\stage$ is strictly worse than choosing $\betan=(\kc+1)\betac/\kc$. Thus, we start the algorithm with $\batchcnt_1=1$.

Once the algorithm reaches the stationary phase, the quantity $\phi_\stage$ has to be increased to increase the effective mini-batch size. For a certain $\kc$, we gradually increase $\betac$ in steps of $\frac{1}{\batchsize}$, i.e., $\betan = \betac + \frac{1}{\batchsize}$, until $\betan=1$. Then, to further decrease the error, $\kc$ has to be incremented.\footnote{The increments of $\batchcnt$ and $\batchscale$ can be freely chosen to trade off the number of stages (and thus stationary phase detections) with the possible gain in convergence speed.} The parameter $\betan$ can be decreased accordingly such that $\phi_{\stage+1} > \phi_{\stage}$. Under this constraint, the optimal parameter $\betan$ has to be determined to maximize the convergence speed based on the system parameters. One approach is to incorporate the order statistics in \eqref{eq:convergence} to optimize convergence with respect to time \cite{KasHanna2020}. However, the derivations are intractable with more complex order statistics. Hence, we propose in \cref{thm:optimal_beta} to maximize the error as function of iterations while minimizing the expectation of the order statistics. %

\begin{theorem} \label{thm:optimal_beta}
When switching from waiting for $\kc$ workers to $\kn > \kc$, we define $\beta_{\min} \define \frac{1}{\batchsize} \! \cdot \! \left\lceil\frac{\kc \batchsize}{\kn}\right\rceil$. To obtain the optimal $\betan$, we solve the following optimization problem
\begin{align}
    \beta_{\text{opt}} &= \argmax_{\beta_{\min} \leq \batchscale \leq 1} \frac{\vert\partial \expdevk[\kn] / \partial \iter\vert}{\mu_{\kn:\workercnt}(\batchscale)} \define \argmax_{\beta_{\min} \leq \batchscale \leq 1} O, \! \nonumber
\end{align}
where to calculate $\partial \expdevk[\kn] / \partial \iter$ according to \eqref{eq:convergence}, we replace $\iter$ by $(\timevar_{\stage} - \tp)/\mukcur$ and $F(\initvec)$ by $F(\model_{\tp})$ to account for the state of convergence. The parameter $\betan$ is then obtained by rounding\footnote{If the result for $\kn>\kc$ is $\betan=1$, this could indicate that a larger value should be chosen for $\kn$.}, i.e., $\betan = \lceil \batchsize \batchscale_{\text{opt}} \rceil / \batchsize$.
If the objective $O$ is concave, the solution can be obtained by solving the unconstrained optimization and subsequently \emph{clipping} $\lceil \batchsize \batchscale_{\text{opt}} \rceil / \batchsize$ to $\betan \in [\batchscale_{\min},1]$. We obtain $\batchscale_{\text{opt}}$ by solving \eqref{eq:betaoptgeneral} for $\batchscale$ and choosing the value that maximizes $O$:
\begin{align}
    \!\!\!\! \frac{\partial \mu_{\kn:\workercnt}(\batchscale)}{\partial \batchscale} \batchscale (\batchscale \kn  - \phi_\stage) + \phi_\stage (\mukcur - \mu_{\kn:\workercnt}(\batchscale))  = 0 \! \label{eq:betaoptgeneral}
\end{align}
\end{theorem}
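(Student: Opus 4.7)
The plan is three-fold: derive the unconstrained optimality condition by setting $\partial O / \partial \batchscale = 0$ and manipulating it into the form of~\eqref{eq:betaoptgeneral}, use the stated concavity of $O$ to justify clipping against $[\batchscale_{\min},1]$, and then round to the admissible lattice $\{1/\batchsize, 2/\batchsize, \dots, 1\}$.

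I would begin by rewriting~\eqref{eq:convergence} compactly as $\expdevk[\kn] = A/(\kn\batchscale) + \gamma^{\iter}\bigl(F(\initvec) - F^\star - A/(\kn\batchscale)\bigr)$ with $A \define \lr\lipschitz\gradvar/(2\convexity\batchsize)$ and $\gamma \define 1-\lr\convexity$, and computing $\partial \expdevk[\kn]/\partial \iter = \log(\gamma)\,\gamma^{\iter}\bigl(F(\initvec)-F^\star - A/(\kn\batchscale)\bigr)$. Applying the prescribed substitutions $F(\initvec)\mapsto F(\model_{\tp})$ and $\iter\mapsto(\tc-\tp)/\mukcur$ and taking absolute values (the bracketed factor is positive in the transient phase while $\log\gamma<0$) produces the numerator $N(\batchscale)$ of $O$. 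The quotient rule applied to $O(\batchscale) = N(\batchscale)/\mu_{\kn:\workercnt}(\batchscale)$ then gives the optimality condition $N'(\batchscale)\,\mu_{\kn:\workercnt}(\batchscale) = N(\batchscale)\,\mu'_{\kn:\workercnt}(\batchscale)$, in which $N'(\batchscale)$ contributes a factor proportional to $A/(\kn\batchscale^{2})$ from the error-floor dependence. Multiplying through by $\kn\batchscale^{2}$, using $\phi_{\stage+1} = \kn\batchscale$, and expressing $F(\model_{\tp})-F^\star$ through the stage-$\stage$ quantities $\phi_\stage$ and $\mukcur$ should, after routine algebra, collapse the identity into~\eqref{eq:betaoptgeneral}.

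For the constraint and discretization: $\batchscale_{\min} = \lceil\kc\batchsize/\kn\rceil/\batchsize$ is the smallest lattice element for which $\phi_{\stage+1}\geq\phi_\stage$ when $\betac=1$, so that the error floor does not increase upon transition; $\batchscale\leq 1$ is the trivial physical upper bound. Concavity of $O$ on $[\batchscale_{\min},1]$ ensures that the constrained maximizer is either the unique interior critical point from~\eqref{eq:betaoptgeneral} or an endpoint, which is exactly the clipping rule stated in the theorem. Rounding $\batchscale_{\text{opt}}$ up to $\betan = \lceil\batchsize\,\batchscale_{\text{opt}}\rceil/\batchsize$ then returns the scale to the admissible lattice while preserving $\phi_{\stage+1}\geq\phi_\stage$; rounding down could violate the floor-reduction requirement.

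The main obstacle is the final algebraic step: a direct derivative of $N(\batchscale)/\mu_{\kn:\workercnt}(\batchscale)$, under the natural approximation $F(\model_{\tp})-F^\star\approx A/\phi_\stage$ (the stage-$\stage$ error floor), reproduces the terms $\phi_\stage\mu_{\kn:\workercnt}$ and $\mu'_{\kn:\workercnt}\batchscale(\kn\batchscale-\phi_\stage)$ of~\eqref{eq:betaoptgeneral} but not the $\phi_\stage\mukcur$ summand. Recovering this extra term requires carefully tracking how the substituted iteration index $(\tc-\tp)/\mukcur$ enters the factor $\gamma^{\iter}$, rather than treating the substituted value as $\batchscale$-independent, so that the stage-$\stage$ time scale couples into the numerator. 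A secondary concern is verifying concavity of $O$ in general; for the shifted-exponential models of \cref{def:simple_model,def:general_model} this is immediate because $\mu_{\kn:\workercnt}(\batchscale)$ is affine in $\batchscale$ there, but a general proof would need case-by-case analysis of the order statistics.
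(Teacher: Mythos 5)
Your high-level plan (quotient rule on $O$, concavity-based clipping, rounding to the lattice of multiples of $1/\batchsize$) matches the paper's, but the central algebraic step is missing, and the fallback you propose does not close it. The paper never approximates $F(\model_{\tp})-F^\star$ by the stage-$\stage$ error floor. Instead, it evaluates the rate right after switching, so the factor $\alpha e^{-\alpha(\iter-\iter^\prime)}$ reduces to $\alpha$ and is independent of $\betan$; the entire $\betan$-dependence of the numerator enters through the reference error $e(\iter^\prime)$ at the switching instant. That error is computed from the stage-$\stage$ recursion (floor $\frac{\lr\lipschitz\gradvar}{2\convexity\batchsize\phi_\stage}$) evaluated after $(\tc-\tp)/\mukcur$ iterations, with $\tc-\tp$ taken from \cref{thm:sw_times}; since that duration depends on $\betan$ through $\muknext$ and $\phi_{\stage+1}$, this is exactly where the coupling you identified as missing comes in. Crucially, after substituting the switching-time formula the factors $e(\tp)-\frac{\lr\lipschitz\gradvar}{2\convexity\batchsize\phi_\stage}$ cancel exactly, giving
\begin{align*}
e(\iter^\prime) = \frac{\lr\lipschitz\gradvar}{2\convexity\batchsize\phi_\stage}\left(1+\frac{(\phi_{\stage+1}-\phi_\stage)\,\mukcur}{(\muknext-\mukcur)\,\phi_{\stage+1}}\right),
\qquad
O = \frac{\phi_{\stage+1}-\phi_\stage}{\phi_\stage\phi_{\stage+1}}\cdot\frac{1}{\muknext-\mukcur},
\end{align*}
with no approximation, so maximizing $O$ is equivalent to minimizing $\phi_{\stage+1}(\muknext-\mukcur)/(\phi_{\stage+1}-\phi_\stage)$, and one application of the quotient rule to this reduced objective yields \eqref{eq:betaoptgeneral}, including the $\phi_\stage\mukcur$ summand you could not recover. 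Your diagnosis points in the right direction but at the wrong place: the coupling does not come from the $\gamma^{\iter}$ factor of the stage-$(\stage+1)$ derivative (which is evaluated at zero elapsed iterations in the new stage), but from $e(\iter^\prime)$ itself via the optimal switching duration of \cref{thm:sw_times}.

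Two further remarks. First, replacing $e(\tp)$ by the stage-$\stage$ error floor is not a harmless simplification: switching occurs strictly before the floor is reached, and with that substitution the correction term in $e(\iter^\prime)$ above is lost, which is precisely why your condition comes out as $\frac{\partial\muknext}{\partial\betan}\betan(\betan\kn-\phi_\stage)-\phi_\stage\muknext=0$ instead of \eqref{eq:betaoptgeneral}; the exact cancellation makes any such approximation unnecessary. Second, for the theorem itself concavity is only an assumption, so your treatment of clipping and rounding is fine, but note that even under \cref{def:simple_model}, where $\mu_{\kn:\workercnt}(\batchscale)$ is affine in $\batchscale$, concavity of $O$ is not ``immediate'': the paper needs a separate argument (the proof of \cref{cor:beta_opt_simple}) that ultimately rests on comparing the harmonic sums $\sum_{j=\workercnt-\kn+1}^{\workercnt-\kc}\frac{1}{j}$ and $\sum_{j=\workercnt-\kc+1}^{\workercnt}\frac{1}{j}$.
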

\begin{proof}
    The proof is given in \cref{sec:proof_optimal_beta}.
\end{proof}

If multiple solutions exist, one should choose the smallest $\betac$ as it yields the lowest computation effort.

\newcommand{\helper}{\ensuremath{\Psi}}
\begin{corollary} \label{cor:beta_opt_simple}
Under the model in \cref{def:simple_model}, for $\kn>\kc$, the unconstrained optimization in \cref{thm:optimal_beta} is concave in $\batchscale$. Solving \eqref{eq:betaoptgeneral} yields two possible values for $\batchscale$, i.e.,
\begin{equation*}
    \batchscale_{1,2} = \frac{\phi_\stage}{\kn} \left(1 \pm \sqrt{1-\frac{\kn}{\kc} \cdot \mukcurprimeone / \muknextprimeone}\right).
\end{equation*}
The value $\batchscale_{\text{opt}} \in \{\batchscale_1, \batchscale_2\}$ is chosen such that the objective $O$ in \cref{thm:optimal_beta} is maximized. Through clipping and rounding, we obtain $\betan = \max\{\lceil(\kc \batchsize)/\kn\rceil/\batchsize, \min\{1, \lceil \batchsize \batchscale_{\text{opt}}\rceil/\batchsize\}\}$.
\begin{proof}[Sketch of Proof]
The proof follows from inserting the expectation of the order statistics in \eqref{eq:betaoptgeneral}, simplifying, and finding the root of the resulting polynomial that maximizes $O$. A detailed proof is given in \cref{app:proof_beta_opt_simple}.
\end{proof}
\end{corollary}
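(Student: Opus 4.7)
The plan is to specialize the first-order condition \eqref{eq:betaoptgeneral} of \cref{thm:optimal_beta} to the shifted-exponential model of \cref{def:simple_model}, reduce it to a quadratic in $\batchscale$, and then handle discriminant, root selection, concavity, and the discrete rounding/clipping in sequence.

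First, I would substitute the order statistics from \cref{prop:renyi}. Under \cref{def:simple_model}, $\mu_{\kn:\workercnt}(\batchscale) = \muknextprimeone\,\batchscale + (\constcompnoscale + \constcomm)$ is affine in $\batchscale$ with constant slope $\muknextprimeone$, and the additive shift $\constcompnoscale + \constcomm$ cancels in $\mukcur - \mu_{\kn:\workercnt}(\batchscale) = \mukcurprimeone \betac - \muknextprimeone \batchscale$. Plugging these into \eqref{eq:betaoptgeneral} and collecting powers of $\batchscale$ produces the quadratic $\muknextprimeone \kn \batchscale^2 - 2\muknextprimeone \phi_\stage \batchscale + \phi_\stage \mukcurprimeone \betac = 0$. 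Dividing by $\muknextprimeone > 0$ and using $\phi_\stage = \kc \betac$ (so that $\kn \betac/\phi_\stage = \kn/\kc$), the quadratic formula yields exactly the two candidates $\batchscale_{1,2}$ in the statement.

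Next, I would verify that the discriminant $1-(\kn/\kc)\mukcurprimeone/\muknextprimeone$ is non-negative whenever $\kn > \kc$, which under the simple model reduces to the monotonicity of $H_m/m$ in $m$, where $H_m \define \sum_{j=\workercnt-m+1}^{\workercnt} 1/j$. This monotonicity follows because the summands added in going from $H_\kc$ to $H_\kn$ correspond to strictly smaller indices $j$ and hence strictly larger values $1/j$ than any term already in the tail sum $H_\kc$. Between the two real candidates, $\batchscale_{\text{opt}}$ is then picked by direct evaluation of $O$. To justify the reduction of the constrained problem to an unconstrained solve plus clipping, I would verify that $O(\batchscale)$ is concave on $[\batchscale_{\min}, 1]$: under \cref{def:simple_model} and after the substitutions of \cref{thm:optimal_beta}, the numerator $|\partial \expdevk[\kn]/\partial \iter|$ is proportional to $A - C/\batchscale$ with $A,C > 0$ independent of $\batchscale$, the denominator $\mu_{\kn:\workercnt}(\batchscale)$ is positive and affine, and $O$ is thus an explicit rational function whose concavity on the feasible interval is checked via the sign of a low-degree polynomial. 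With concavity in hand, rounding up onto the grid $\{1/\batchsize, \dots, 1\}$ and clipping to $[\batchscale_{\min}, 1]$ yields the stated closed form for $\betan$.

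The hardest step is the concavity verification: $O$ is a ratio of a concave function of $\batchscale$ and a positive affine function, and such ratios are not concave in general. I expect to handle it by a direct $O'' \leq 0$ inspection restricted to the feasible interval, taking care that the inequality holds throughout $[\batchscale_{\min}, 1]$ rather than only at the critical point.
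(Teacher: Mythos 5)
Your derivation of the quadratic and its roots is sound and matches the paper: specializing \eqref{eq:betaoptgeneral} with $\mu_{\kn:\workercnt}(\batchscale)$ affine in $\batchscale$ (slope $\muknextprimeone$, shifts cancelling in the difference) gives exactly $\kn\muknextprimeone\betan^2 - 2\muknextprimeone\phi_\stage\betan + \phi_\stage\betac\mukcurprimeone = 0$, and your harmonic-tail-sum argument for $1-\frac{\kn}{\kc}\mukcurprimeone/\muknextprimeone > 0$ is precisely the inequality the paper also needs (it even makes the realness of the roots explicit, which the paper leaves implicit). The root selection by evaluating $O$ and the clip-and-round step are likewise as in \cref{app:proof_beta_opt_simple}.

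The gap is in the concavity step, and it is not merely a matter of finishing a computation: you characterize the numerator $\vert\partial \expdevk[\kn]/\partial\iter\vert$ as $A - C/\batchscale$ with $A$ \emph{independent} of $\batchscale$, i.e., you treat $e(\iter^\prime)$ as a constant. But by the prescription in \cref{thm:optimal_beta}, $e(\iter^\prime)$ is the error at the switching iteration obtained by substituting the switching time of \cref{thm:sw_times}, which itself depends on $\betan$ through $\muknext$ and $\phi_{\stage+1}$; carrying out that substitution is exactly what collapses the objective to $O \propto \frac{\phi_{\stage+1}-\phi_\stage}{\phi_\stage\phi_{\stage+1}\left(\muknext-\mukcur\right)}$ and produces the first-order condition \eqref{eq:betaoptgeneral}. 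Checking curvature of the surrogate $(A - C/\batchscale)/\mu_{\kn:\workercnt}(\batchscale)$ therefore analyzes a different function whose stationary points do not even coincide with the roots you computed, so its concavity (or lack thereof) would not justify the unconstrained-solve-plus-clipping reduction. The fix is to run the curvature analysis on the simplified objective: the paper differentiates $O^\prime \propto \frac{\phi_{\stage+1}\left(\muknext-\mukcur\right)}{\phi_{\stage+1}-\phi_\stage}$ twice and shows the sign of the resulting expression is governed by $\phi_\stage\left(1-\frac{\kn}{\kc}\,\mukcurprimeone/\muknextprimeone\right) > 0$ --- the very inequality you already proved for the discriminant --- so your key lemma is the right one; it just has to be applied to the second derivative of the correct objective rather than to the mischaracterized ratio.
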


For the general model in \cref{def:general_model}, a numerical solution can be obtained for $\betan$ when $\kn>\kc$. This requires the order statistics of the model, which we give in \cref{thm:order_stats}.
\begin{theorem} \label{thm:order_stats}
For $\comprate \neq \commrate$, under the model in \cref{def:general_model} the order statistics of $\rvcomb$ is given by %
\begin{align*}
    \mu_{\waitcnt:\clustersize}^{(2)} = &\sum_{\waitidx=\waitcnt}^\clustersize \sum_{\tmpa=0}^\waitidx \sum_{\tmpb=0}^{\tmpa+\clustersize-\waitidx} \sum_{\tmpc=0}^\tmpb \binom{\clustersize}{\waitidx} \binom{\waitidx}{\tmpa}  \binom{\tmpa+\clustersize-\waitidx}{\tmpb} \binom{\tmpb}{\tmpc} \cdot \\ &(-1)^{\tmpa+\tmpc+1} \frac{\commrate^\tmpa}{(\commrate - \comprate)^\tmpa \alpha} + \constcomm + \constcomp,
\end{align*} where $\alpha \define \commrate(\tmpa+\clustersize-\waitidx-\tmpb+\tmpc) + \comprate(\tmpb-\tmpc)$.
For $\comprate = \commrate$, we have the order statistics of an Erlang distribution $E(2,\comprate)$.
\begin{proof}
The proof is provided in \cref{app:proof_order_stats}.
\end{proof}
\end{theorem}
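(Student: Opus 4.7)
The plan is as follows. Since $\constcomm$ and $\constcompnoscale$ enter deterministically, they shift every order statistic by the same amount $\constcomm+\constcompnoscale$, so it suffices to compute the mean of the $\waitcnt$-th order statistic of the zero-shifted sums $\tilde Z = \tilde X + \tilde Y$ with $\tilde X \sim \exp(\commrate)$ and $\tilde Y \sim \exp(\comprate)$, and add the constant shift at the very end.

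First I would derive the distribution of $\tilde Z$ by direct convolution. For $\commrate \neq \comprate$ this gives the hypoexponential density $f_{\tilde Z}(z) = \frac{\commrate\comprate}{\commrate-\comprate}\br{e^{-\comprate z}-e^{-\commrate z}}$ and, upon integration, the survival function $G(z) = \frac{\commrate e^{-\comprate z} - \comprate e^{-\commrate z}}{\commrate-\comprate}$, together with $F(z) = 1-G(z)$. Then I would invoke the tail-integral identity $\E{\tilde Z_{(\waitcnt:\clustersize)}} = \int_0^\infty \Prob{\tilde Z_{(\waitcnt:\clustersize)} > z}\, dz$ and express the integrand as the usual binomial sum in $F(z)$ and $G(z)$ counting how many of the $\clustersize$ samples lie above/below $z$.

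The bulk of the argument is then converting this representation into the quadruple sum in the statement via three nested binomial expansions. Step (i): write $F = 1-G$ and expand a factor $F^a = \sum_{\tmpa} \binom{a}{\tmpa}(-G)^\tmpa$, producing the binomial coefficient $\binom{\waitidx}{\tmpa}$ and the sign $(-1)^\tmpa$. Step (ii): factor $G = A\,e^{-\comprate z} - B\,e^{-\commrate z}$ with $A = \commrate/(\commrate-\comprate)$ and $B = \comprate/(\commrate-\comprate)$, and apply the binomial theorem to $G^{\tmpa+\clustersize-\waitidx}$, which pulls out the coefficient $\binom{\tmpa+\clustersize-\waitidx}{\tmpb}$ and eventually the scalar factor $\commrate^\tmpa/(\commrate-\comprate)^\tmpa$ after regrouping. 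Step (iii): a final binomial expansion of the remaining $(e^{-\comprate z}-e^{-\commrate z})^\tmpb$-type factor produces the innermost $\binom{\tmpb}{\tmpc}$ together with the sign $(-1)^\tmpc$. Each surviving term has the form $\text{const}\cdot e^{-\alpha z}$ with $\alpha = \commrate(\tmpa+\clustersize-\waitidx-\tmpb+\tmpc) + \comprate(\tmpb-\tmpc)$, so integrating $\int_0^\infty e^{-\alpha z}\,dz = 1/\alpha$ and collecting signs (the extra $-1$ coming from the $1-F^\clustersize$ reabsorption when subtracting the $\int dz$) gives the factor $(-1)^{\tmpa+\tmpc+1}/\alpha$. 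Adding the shift $\constcomm+\constcomp$ completes the non-degenerate case. For $\commrate=\comprate$, the convolution yields an Erlang$(2,\comprate)$ law whose $\waitcnt$-th order-statistic mean is a standard closed form, which I would simply cite.

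The hard part will be purely bookkeeping: tracking signs and the precise powers of $\commrate$ and $\commrate-\comprate$ through the three nested expansions so that the dummy-index ranges come out as $\tmpa\le\waitidx$, $\tmpb\le\tmpa+\clustersize-\waitidx$, $\tmpc\le\tmpb$, and so that the exponent coefficients $(\tmpa+\clustersize-\waitidx-\tmpb+\tmpc)$ and $(\tmpb-\tmpc)$ on $\commrate$ and $\comprate$ match term-for-term. Small differences in the order of expansion and in the choice of which factor to write as $A e^{-\comprate z}$ versus $B e^{-\commrate z}$ yield different-looking but equivalent quadruple sums, so the main care is in aligning these conventions with the form stated in the theorem.
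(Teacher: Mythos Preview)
Your proposal is correct and follows essentially the same route as the paper: compute the hypoexponential CDF of $\tilde Z$ by convolution, plug it into the binomial representation of the order-statistic CDF, apply three nested binomial expansions to reduce everything to sums of pure exponentials $e^{-\alpha z}$, integrate, and restore the deterministic shift $\constcomm+\constcomp$. The only cosmetic difference is that the paper evaluates the mean via $\int_0^\infty z\,\partial_z F_{Z_{(\waitcnt)}}(z)\,dz$ rather than your tail integral $\int_0^\infty \bigl(1-F_{Z_{(\waitcnt)}}(z)\bigr)\,dz$, so the extra sign producing the $+1$ in $(-1)^{\tmpa+\tmpc+1}$ comes from differentiating $e^{-\alpha z}$ instead of from cancelling the constant term against the leading $1$; both routes yield the same quadruple sum.
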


\section{Numerical Results} \label{sec:simulations}
To compare the proposed strategy to the \emph{adaptive-$k$} strategy in \cite{KasHanna2020}, %
we first evaluate the theoretical impact for $\workercnt=50$ workers. We consider the Lipschitz constant $\lipschitz = 2$, an upper bound on the variance of the gradient of the loss function $\gradvar = 10$, and convexity constant $\convexity = 1$. We use the model in \cref{def:simple_model} and determine the switching times and the optimal parameters based on \cref{thm:sw_times,cor:beta_opt_simple}, respectively. 
The results are evaluated upon reaching an error of $\num{1e-3}$ according to \eqref{eq:convergence}.
We compare the resulting gain in runtime and computation cost and the communication overhead to the \emph{adaptive-$k$} strategy proposed in \cite{KasHanna2020} for different parameter sets, i.e., $\compratenoscale, \constcomm \in [0.05, 20]$. The parameter $\constcompnoscale$ is set to zero.

\textbf{Runtime Improvement.} The runtime of the proposed strategy is strictly smaller than that of the \emph{adaptive-$k$} strategy. The gain depends on the setting and is shown in \cref{fig:runtime}. Clearly, the improvement in runtime is largest when communication time is negligible compared to computation time. In this regime, scaling the task sizes has the most impact as compared to \emph{adaptive-$k$}. If communication is dominating over computation, the benefits of varying the batch size are negligible. For a small batch size, more iterations are required to converge and the gain in expected time per iteration is negligible.
\vspace{-.35cm}
\begin{figure}[H]
    \centering
    \includegraphics[width=.8\linewidth]{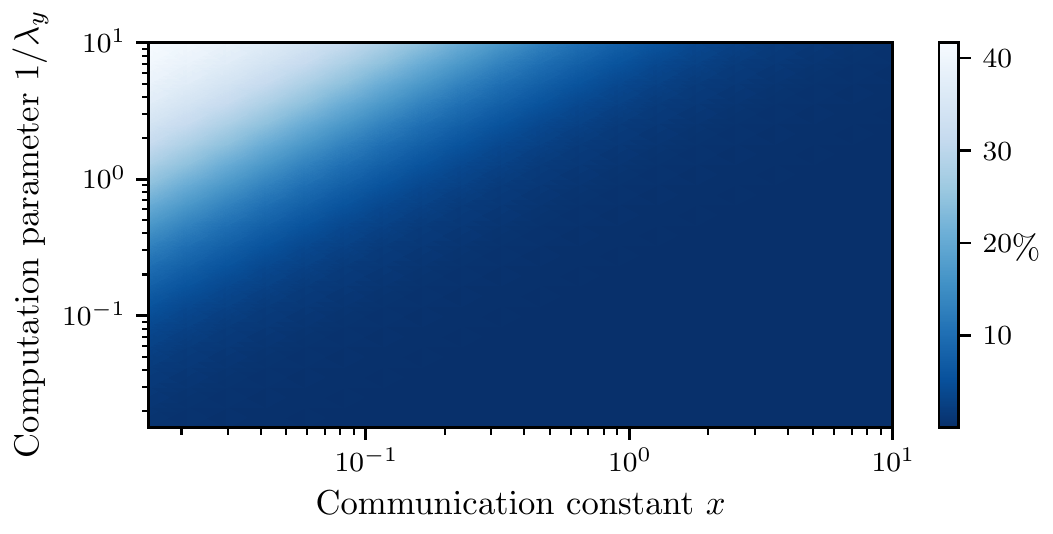}
    \caption{Runtime improvements for $\rvcomb \sim \exp(\lambda_y/\beta) + \constcomm$. Gains, indicated by light colors, are observed for small communication constants and large computation parameters.}
    \label{fig:runtime}
\end{figure}
\vspace{-.2cm}
\textbf{Communication Overhead.} %
Since the size of the result per iteration does not depend on the batch size used per worker, our strategy requires at least the same amount of communication as \emph{adaptive-$k$}. The results are depicted in \cref{fig:communication}. In regimes where the proposed strategy is most beneficial over \emph{adaptive-$k$}, this overhead is in the order of $17\%$.
\vspace{-.35cm}
\begin{figure}[H]
    \centering
    \includegraphics[width=.8\linewidth]{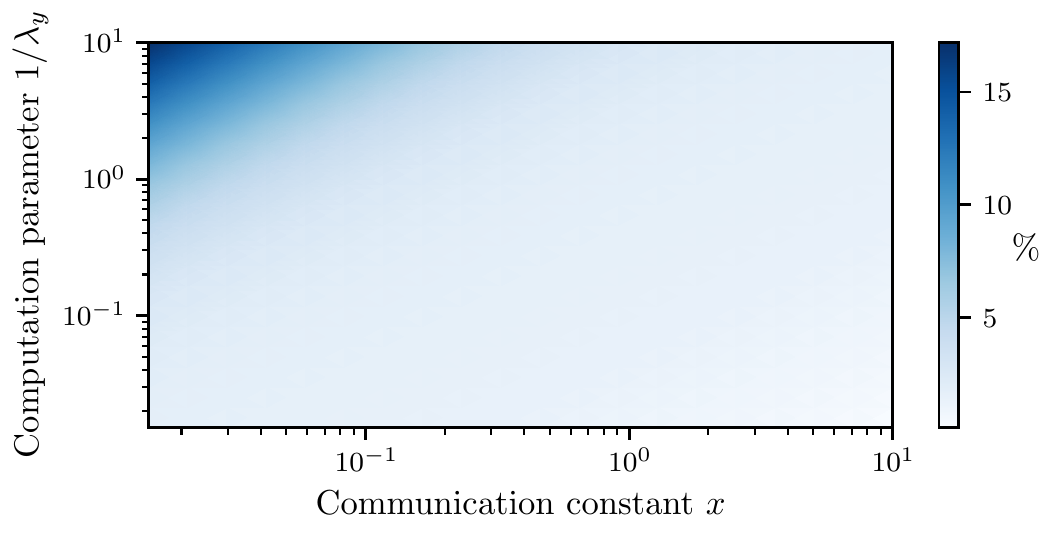}
    \caption{Communication overhead for $\rvcomb \sim \exp(\lambda_y/\beta) +  \constcomm$. Large overheads are indicated by dark colors.}
    \label{fig:communication}
\end{figure}
\vspace{-.2cm}

\textbf{Reduction in Computation.} The advantages of our strategy are twofold. While improving the runtime of the algorithm, it decreases the computation cost. That is because the decrease of task sizes has a larger impact than the increase in number of iterations. \cref{fig:computation} shows the experimental results.
\begin{figure}
    \centering
    \includegraphics[width=.8\linewidth]{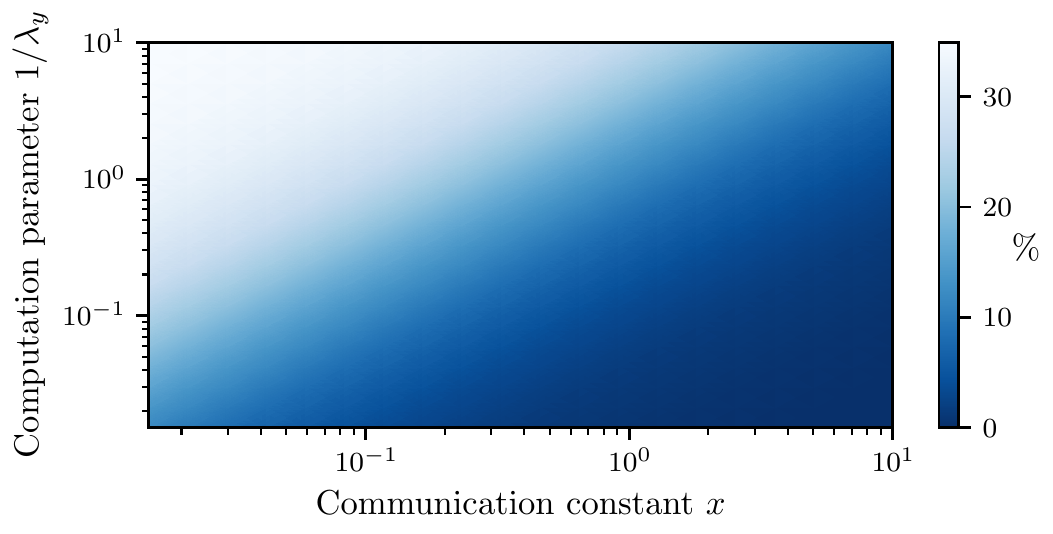}
    \caption{Reduction in computation for $\rvcomb \sim \exp(\lambda_y/\beta) + \constcomm$. Gains, indicated by light colors, are observed for small communication constants and large computation parameters.}
    \label{fig:computation}
    \vspace{-.6cm}
\end{figure}
Extensive numerical experiments show that all the effects mentioned above scale as the number of samples increases. Intuitively, the larger the number of samples $\batchsize$ per worker, the larger the difference between our strategy and \emph{adaptive-$k$}. %

\textbf{Simulations.} We simulate the runtime benefits compared to \emph{adaptive-$k$} by closely following the linear regression setting in \cite{KasHanna2020}. The features of $\samplematrix$ are generated uniformly at random from $[100]$ and the labels $\samplelabelvec$ are chosen uniformly at random from $[10]$. We employ $\workercnt=20$ workers to compute the partial gradients of the loss function $\lossfctsample = (\samplevec \model - \samplelabel)^2$ for a total of $\samplecnt = 400$ samples. The response time of each worker is modeled by a shifted exponential random variable according to \cref{def:simple_model} with $\compratenoscale=1$ and $\constcomm=0.01$. We use $k \leq 10$ and limit the set of possible $\batchscale$'s to $\batchscale \in \{0.2, 0.4, 0.6, 0.8, 1\}$. We use an adapted version of the convergence diagnostics in \cite{pesme2020convergence} to determine the switching times, which is more robust as compared to Pflug's method \cite{Pflug1990}. All presented results are averaged over $100$ simulation runs and depicted in \cref{fig:simulations}.

\vspace{-.2cm}
\begin{figure}[H]
    \centering
    \includegraphics[width=0.7\linewidth]{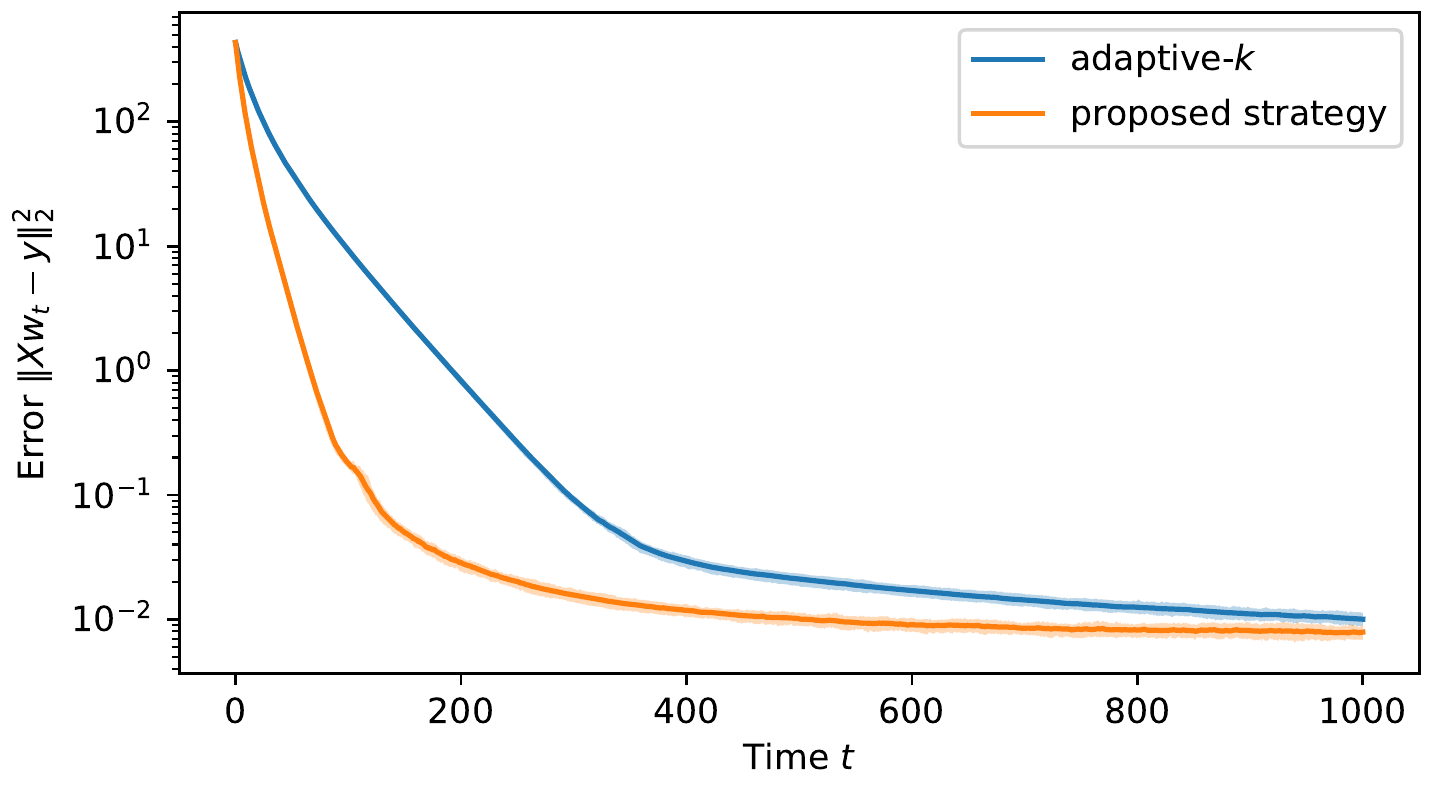}
    \caption{Numerical simulations with $80\%$ quantiles.}
    \label{fig:simulations}
    \vspace{-.4cm}
\end{figure}
One can observe that the proposed algorithm roughly halves the runtime. Further, for a desired error of $\num{2e-2}$, the proposed strategy reduces the computation effort by %
$\approx 59.9\%$ at the expense of increasing the communication load by $\approx 15.7\%$. %
Additional simulations for the delay model introduced in \cref{def:general_model} are provided in \cref{app:additional_simulations}.

\section{Proof of \cref{thm:optimal_beta}} \label{sec:proof_optimal_beta}
For calculating the switching time $\tc$, we require the knowledge of the defining parameters $\kc$ and $\betac$ of the current and the next stage, i.e., $\kn > \kc$ and $1 \geq \betan \geq \beta_{\min}$. However, $\betan$ is yet unknown. To get the best possible $\beta_\text{opt}$, we maximize the objective $O$ at the iteration $j$ of switching with respect to $\betan$.
We replace the current iteration index $\iter$ in \eqref{eq:convergence} by the number of iterations ran in this stage starting from a reference iteration $\iter^\prime$ and change the reference error from $F(\initvec)$ to $F(\model_{\iter^\prime})$. Hence, we obtain
\begin{align}
    \vert\partial E(\kn, \beta, j) / \partial \iter\vert = \alpha e^{-\alpha (\iter-\iter^\prime)} \bigg(e(\iter^\prime) - \frac{\eta \lipschitz \gradvar}{2 \convexity \phi_{\stage+1} \batchsize} \bigg),
    \label{eq:convergence_rate_proof}
\end{align}
with $\alpha \define - \log(1-\lr \convexity) > 0$ and $e(\iter^\prime) \define F(\model_{\iter^\prime}) - F^\star$ being the difference of the loss based on the model $\model_{\iter^\prime}$ at iteration $\iter^\prime$ compared to the optimal loss. Since we are interested in the convergence rate right after switching, we set $\iter=\iter^\prime$, i.e., $e(\iter^\prime) = e(\iter)$.
To obtain $e(\iter)$, as in \cite{KasHanna2020} we use \eqref{eq:convergence}, replace $\iter$ by the expected number of iterations $(\tc-\tp)/\mukcur$ ran in this stage and use the reference error $e(\tp) \define F(\model_{\tp})-F^\star$. Thus, the convergence rate is a function of the switching time $\tc$, which is again a function of $\betan$. From \cref{thm:sw_times} we know $\tc-\tp$, which is
\begin{align}
    &\!\!\!\tc - \tp = \nonumber \\
    &\!\!\!\!= \frac{\mukcur}{\alpha} \log \left( \frac{(\muknext \! - \! \mukcur) \phi_{\stage+1}(2\convexity \phi_\stage  \batchsize e(\tp) \! - \! \lr \lipschitz \gradvar)}{\mukcur \lr \lipschitz \gradvar (\phi_{\stage+1} - \phi_\stage)} \right)\!.\! \label{eq:swtimediff}
\end{align}
Thus, we obtain for the expected deviation from the optimal loss at switching iteration $\iter$
\begin{align}
    &e(\iter^\prime) = e(\iter) = E(\kc, \betac, \iter) = \nonumber \\
    &= \frac{\lr \lipschitz \gradvar}{2 \convexity \batchsize \phi_\stage} \! + \! e^{\frac{-(\tc - \tp)\alpha}{\mukcur}} \bigg(e(\tp) - \frac{\lr \lipschitz \gradvar}{2 \convexity \batchsize \phi_\stage} \bigg) \label{eqline:swtimediff_before} \\
    &= \frac{\lr \lipschitz \gradvar}{2 \convexity \batchsize \phi_\stage} \! + \! \frac{\mukcur \lr \lipschitz \gradvar (\phi_{\stage+1} - \phi_\stage)  \Big(e(\tp) - \frac{\lr \lipschitz \gradvar}{2 \convexity \batchsize \phi_\stage} \Big)}{(\muknext \! - \! \mukcur) \phi_{\stage+1}(2\convexity \phi_\stage  \batchsize e(\tp) \! - \! \lr \lipschitz \gradvar)} \!\!\label{eqline:swtimediff_after} \\
    &= \frac{\lr \lipschitz \gradvar}{2 \convexity \batchsize \phi_\stage} + \frac{\lr \lipschitz \gradvar}{2\convexity\batchsize \phi_\stage} \frac{\mukcur (\phi_{\stage+1} - \phi_\stage) \Big(e(\tp) - \frac{\lr \lipschitz \gradvar}{2 \convexity \batchsize \phi_\stage} \Big)}{(\muknext \! - \! \mukcur) \phi_{\stage+1}\Big(e(\tp) \! - \! \frac{\lr \lipschitz \gradvar}{2\convexity\batchsize \phi_\stage}\Big)} \nonumber \\
    &= \frac{\lr \lipschitz \gradvar}{2 \convexity \batchsize \phi_\stage} \bigg( 1 + \frac{(\phi_{\stage+1} - \phi_\stage) \mukcur}{(\muknext \! - \! \mukcur) \phi_{\stage+1}} \bigg), \nonumber
\end{align}
where from \eqref{eqline:swtimediff_before} to \eqref{eqline:swtimediff_after} we inserted \eqref{eq:swtimediff}. Using this result and that $\alpha e^{-\alpha (\iter-\iter^\prime)}$ is independent of $\betan$, the objective function $O$ can be simplified to
\begin{align*}
    O &= \frac{e(\iter^\prime) - \frac{\eta \lipschitz \gradvar}{2 \convexity \phi_{\stage+1} \batchsize}}{\muknext} = \frac{\frac{\lr \lipschitz \gradvar}{2 \convexity \batchsize \phi_\stage} \bigg( 1 + \frac{(\phi_{\stage+1} - \phi_\stage) \mukcur}{(\muknext \! - \! \mukcur) \phi_{\stage+1}} \bigg) - \frac{\eta \lipschitz \gradvar}{2 \convexity \phi_{\stage+1} \batchsize}}{\muknext} \\
    &= \frac{1}{\muknext} \bigg( \frac{1}{\phi_\stage} \bigg( 1 + \frac{(\phi_{\stage+1} - \phi_\stage) \mukcur}{(\muknext \! - \! \mukcur) \phi_{\stage+1}} \bigg) - \frac{1}{\phi_{\stage+1}} \bigg) \\
    &= \frac{1}{\muknext} \bigg( \frac{1}{\phi_\stage} - \frac{1}{\phi_{\stage+1}} + \frac{(\phi_{\stage+1} - \phi_\stage) \mukcur}{(\muknext \! - \! \mukcur) \phi_\stage \phi_{\stage+1}} \bigg) \\
    &= \frac{1}{\muknext} \bigg( \frac{\phi_{\stage+1}-\phi_\stage}{\phi_\stage \phi_{\stage+1}} + \frac{(\phi_{\stage+1} - \phi_\stage) \mukcur}{(\muknext \! - \! \mukcur) \phi_\stage \phi_{\stage+1}} \bigg) \\
    &= \frac{1}{\muknext} \frac{\phi_{\stage+1}-\phi_\stage}{\phi_\stage \phi_{\stage+1}} \bigg( 1 + \frac{\mukcur}{(\muknext \! - \! \mukcur)} \bigg) \\
    &= \frac{1}{\muknext} \frac{\phi_{\stage+1}-\phi_\stage}{\phi_\stage \phi_{\stage+1}} \bigg( \frac{\muknext \! - \! \mukcur}{\muknext \! - \! \mukcur} + \frac{\mukcur}{(\muknext \! - \! \mukcur)} \bigg) \\
    &= \frac{1}{\muknext} \frac{\phi_{\stage+1}-\phi_\stage}{\phi_\stage \phi_{\stage+1}} \bigg( \frac{\muknext}{\muknext \! - \! \mukcur} \bigg) \\
    &= \frac{\phi_{\stage+1}-\phi_\stage}{\phi_\stage \phi_{\stage+1}} \bigg( \frac{1}{\muknext \! - \! \mukcur} \bigg).
\end{align*}
Since $\phi_\stage$ does not depend on $\betan$, the optimization problem consequently results as
\begin{align*}
    \beta_{\text{opt}} &= \argmax_{\beta_{\min} \leq \betan \leq 1} O = \argmin_{\beta_{\min} \leq \betan \leq 1} \frac{\phi_{\stage+1} \muknext \! - \! \phi_{\stage+1} \mukcur}{\phi_{\stage+1}-\phi_\stage}. \\
    &= \argmin_{\beta_{\min} \leq \betan \leq 1} O^\prime,
\end{align*}
where the scaled and inverted objective $O$ is referred to as $$O^\prime \define \frac{1}{O \cdot \phi_\stage} = \frac{\phi_{\stage+1} \muknext \! - \! \phi_{\stage+1} \mukcur}{\phi_{\stage+1}-\phi_\stage}.$$ The parameter $\betan$ has to be a multiple of $\frac{1}{\batchsize}$, and hence we require rounding to $\betan = \lceil \batchsize \batchscale_{\text{opt}} \rceil / \batchsize$.
If the objective is a concave function, we can obtain the solution by setting the derivative to zero and clipping $\lceil \batchsize \batchscale_{\text{opt}} \rceil / \batchsize$ to $\betan \in [\batchscale_{\min},1]$. Hence, if the objective is a concave function in $\betan$, we only require that
\begin{align*}
    \frac{\partial O^\prime}{\partial \betan} = &\frac{\partial}{\partial \betan} \bigg( \frac{\phi_{\stage+1} \muknext \! - \! \phi_{\stage+1} \mukcur}{\phi_{\stage+1}-\phi_\stage} \bigg) = 0.
\end{align*}
We define $f(\betan) \define \phi_{\stage+1} \muknext \! - \! \phi_{\stage+1} \mukcur$ and $g(\betan) \define \phi_{\stage+1}-\phi_\stage$ and apply the quotient rule so that
\begin{align*}
    \frac{\partial O^\prime}{\partial \betan} &= \frac{\partial (f(\betan)/g(\betan))}{\partial \betan} = \\
    &= \frac{\frac{\partial f(\betan)}{\partial \betan} \cdot g(\betan) - \frac{\partial g(\betan)}{\partial \betan} \cdot f(\betan)}{g(\betan)^2}.
\end{align*} 
\renewcommand{\mukcurprime}{\ensuremath{\mukcur^\prime}}
\renewcommand{\muknextprime}{\ensuremath{\muknext^\prime}}
For clarity of presentation, we further define $\mukcurprime \define \frac{\partial \mukcur}{\partial \betac}$ and $\muknextprime = \frac{\partial \muknext}{\partial \betan}$. Ignoring the denominator $g(\betan)^2>0$, we obtain
\begin{align}
    0 &= \frac{\partial f(\betan)}{\partial \betan} \cdot g(\betan) - \frac{\partial g(\betan)}{\partial \betan} \cdot f(\betan) \nonumber \\
    &= \begin{aligned}[t] &\left(\kn \muknext + \phi_{\stage+1} \muknextprime - \kn \mukcur \right) (\phi_{\stage+1}-\phi_\stage)   \nonumber \\
    &- (\phi_{\stage+1} \muknext - \phi_{\stage+1} \mukcur) \kn \end{aligned}  \nonumber \\
    &= \begin{aligned}[t] &\kn \muknext \phi_{\stage+1} + \phi_{\stage+1}^2 \muknextprime - \kn \mukcur \phi_{\stage+1}  \nonumber \\
    &- \kn \muknext \phi_\stage - \phi_\stage \phi_{\stage+1} \muknextprime + \kn \mukcur \phi_\stage  \nonumber \\
    &- \phi_{\stage+1} \muknext \kn - \phi_{\stage+1} \mukcur \kn \end{aligned} \nonumber \\
    &= \phi_{\stage+1}^2 \muknextprime - \phi_\stage \kn (\muknext - \mukcur) - \phi_\stage \phi_{\stage+1} \muknextprime. \label{eqline:derivative_numerator}
\end{align}
This is equivalent to solving
\begin{align*}
    \muknextprime \betan (\betan \kn - \phi_\stage) - \phi_\stage (\muknext - \mukcur) = 0,
\end{align*}
which concludes the proof.

\section{Conclusion}
We proposed an adaptive distributed SGD strategy that varies the size of the computational task assigned to the workers and the straggler tolerance per iteration to improve the runtime of the algorithm. While the expected cost of communication is slightly increased, the overall computation effort is reduced significantly. The main benefits of this strategy are obtained when communication is fast and cheap compared to computation, e.g., in data centers with shared clusters. %

\bibliographystyle{IEEEtran}
\bibliography{literature,paperpile,Old_Refs}

\newpage

\appendix

\subsection{Proof of \cref{thm:sw_times}}
\label{app:proof_sw_times}
For the proof of \cref{thm:sw_times}, we follow similar lines as in \cite{KasHanna2020}. Let $\alpha \define -\log(1-\eta c)$, then by expressing the iteration $\iter$ as the expected number of iterations after time $\timevar$ as $\frac{\timevar}{\muk}$ we can express \eqref{eq:convergence} considering the parameters of stage $\stage$, i.e., $\betac$ and $\kc$ as
\begin{align*}
    E(\kc, \betac, \timevar) &= \frac{\eta \lipschitz \gradvar}{2 \convexity \kc \batchsize \betac} \! + \! (1-\eta \convexity)^{\frac{\timevar}{\mukcur}} \Big(\! F(\initvec) \! - \! F^\star \! - \! \frac{\eta \lipschitz \gradvar}{2 \convexity \kc \batchsize \betac}\! \Big) \\
    & = \frac{\eta \lipschitz \gradvar}{2 \convexity \kc \batchsize \betac} \! + \! e^{\frac{-\alpha \timevar}{\mukcur}} \Big(\! F(\initvec) \! - \! F^\star \! - \! \frac{\eta \lipschitz \gradvar}{2 \convexity \kc \batchsize \betac}\! \Big).
\end{align*}

Let $e(\timevar) \define F(\model_\timevar) - F^\star$ be the difference of the loss based on the model $\model_\timevar$ at time $\timevar$ compared to the optimal loss. Let further $\timevar^\prime \leq \timevar$ be a reference time, then the convergence rate, i.e., the decrease of the error in \eqref{eq:convergence} with respect to time, can be calculated as follows:
\begin{align}
    \left\vert \frac{\partial E(\kc, \betac, \timevar)}{\partial \timevar} \right\vert &= \frac{\alpha}{\mukcur} e^{\frac{-\alpha (\timevar-\timevar^\prime)}{\mukcur}} \bigg(\! e(\timevar^\prime) \! - \! \frac{\eta \lipschitz \gradvar}{2 \convexity \kc \batchsize \betac}\! \bigg) \label{eq:convergence_rate}
\end{align}

For the optimal switching time at when to advance from stage $\stage$ to stage $\stage+1$, the following criterion has to be fulfilled:
\begin{equation}
    \label{eq:cond}
    \left\vert \frac{\partial E(\kn, \betan, \timevar)}{\partial \timevar} \right\vert > \left\vert \frac{\partial E(\kc, \betac, \timevar)}{\partial \timevar} \right\vert
\end{equation}

Thereby, we use $\tp$ as reference time in stage $\stage$, i.e., for $\tc \geq \timevar \geq \tp$ and $\tc$ as reference time in stage $\stage+1$, i.e., for $\timevar_{\stage+1} \geq \timevar \geq \tc$.
The value for $e(\tc)$ is given by
\begin{align}
    e(\tc) = \frac{\eta \lipschitz \gradvar}{2 \convexity \kc \batchsize \betac} \! + \! e^{\frac{-\alpha (\tc-\tp)}{\mukcur}} \Big(\! e(\tp) \! - \! \frac{\eta \lipschitz \gradvar}{2 \convexity \kc \batchsize \betac}\! \Big). \label{eq:etc}
\end{align}
We express the rates in \cref{eq:cond} by \eqref{eq:convergence_rate} with the respective reference times, evaluate the resulting equation at $\timevar=\tc$ and substitute $e(\tc)$ by \eqref{eq:etc}. Simplifying yields
\begin{align*}
    e^{\frac{-\alpha (\tc-\tp)}{\mukcur}} &(\muknext-\mukcur) \bigg(e(\tp)-\frac{\lr \lipschitz \gradvar}{2 \convexity \batchsize \phi_\stage} \bigg) \\
    & \leq \mukcur \bigg( \frac{\lr \lipschitz \gradvar}{2 \convexity \batchsize \phi_\stage} - \frac{\lr \lipschitz \gradvar}{2 \convexity \batchsize \phi_{\stage+1}} \bigg).
\end{align*}
Further simplifying and solving for $\tc$ concludes the proof.

\subsection{Proof of \cref{cor:beta_opt_simple}} \label{app:proof_beta_opt_simple}
We will first proof that the objective $O$ in \cref{thm:optimal_beta} is concave in $\beta$ for the model in \cref{def:simple_model}. In the proof of \cref{thm:optimal_beta}, the numerator resulting from the quotient rule for the derivative of $O^\prime$ was determined. Considering this particular model, we have $f(\betan) \define \phi_{\stage+1} \muknextone \! - \! \phi_{\stage+1} \mukcurone$ and $g(\betan) \define \phi_{\stage+1}-\phi_\stage$. We use the simplification in \eqref{eqline:derivative_numerator} as the numerator of the quotient rule and simplify the expression using the model in \cref{def:simple_model}.
\renewcommand{\mukcurprimeone}{\ensuremath{\mu_{\stage}^{{(1)}^\prime}}}
\renewcommand{\muknextprimeone}{\ensuremath{\mu_{\stage+1}^{(1)^\prime}}}
For clarity of presentation, we define $\mukcurprimeone \define \frac{\partial \mu_{\stage}^{(1)}}{\partial \betac}$ and $\muknextprimeone \define \frac{\partial \mu_{\stage+1}^{(1)}}{\partial \betan}$. Considering the model in \cref{def:simple_model}, it holds that $\muknext = \betan \muknextprimeone + \constcomm$. Thus, using \eqref{eqline:derivative_numerator} we have
\begin{align*}
    &\frac{\partial f(\betan)}{\partial \betan} \cdot g(\betan) - \frac{\partial g(\betan)}{\partial \betan} \cdot f(\betan) = \nonumber \\
   &= \muknextprimeone \betan (\betan \kn^2 - \phi_\stage \kn) - \phi_\stage \kn (\muknextone - \mukcur) \\
   &= \begin{aligned}[t] \muknextprimeone \betan &(\betan \kn^2 - \phi_\stage \kn) \\ &- \phi_\stage \kn (\betan \muknextprimeone + \constcomm - \mukcur) \end{aligned} \\
   &= \begin{aligned}[t] \betan^2 \muknextprimeone &\kn^2 - \betan \left( \muknextprimeone \phi_\stage \kn + \phi_\stage \kn \muknextprimeone \right) \\
   &- \phi_\stage \kn (\constcomm - \mukcur) \end{aligned} \\
   &= \begin{aligned}[t] \betan^2 \muknextprimeone \kn^2 - 2 \betan \muknextprimeone \phi_\stage \kn - \phi_\stage \kn (\constcomm - \mukcur) \end{aligned} \\
   &= \begin{aligned}[t] \phi_\stage \kn \muknextprimeone \left( \betan^2 \kn / \phi_\stage - 2 \betan - (\constcomm - \mukcur) / \muknextprimeone \right). \end{aligned}
\end{align*}

Thus, the derivative of the transformed objective $O^\prime$ w.r.t. $\betan$ results as
\begin{align*}
    \frac{\partial O^\prime}{\partial \betan} &= \frac{\phi_\stage \kn \muknextprimeone \left( \frac{\betan^2 \kn}{\phi_\stage} - 2 \betan - \frac{\constcomm - \mukcur}{\muknextprimeone} \right)}{(\phi_{\stage+1} - \phi_\stage)^2}.
\end{align*}

Since $\phi_\stage \kn \muknextprimeone$ does not depend on $\betan$, we omit this term to show concaveness of $O^\prime$ w.r.t. $\betan$. We thus calculate a scaled function of the second derivative of $O^\prime$ as
\begin{align}
    & \frac{\partial}{\partial \betan} \frac{\phi_\stage \kn \muknextprimeone \left( \frac{\betan^2 \kn}{\phi_\stage} - 2 \betan - \frac{\constcomm - \mukcur}{\muknextprimeone} \right)}{(\phi_{\stage+1} - \phi_\stage)^2} \nonumber \\
    &= \frac{(2\betan \kn / \phi_\stage - 2) (\phi_{\stage+1} - \phi_\stage)^2}{(\phi_{\stage+1} - \phi_\stage)^4} \nonumber \\ 
    &\,\,\,- \frac{\left(\frac{\betan^2 \kn}{\phi_\stage} - 2 \betan - \frac{\constcomm - \mukcur}{\muknextprimeone} \right) \cdot 2 (\phi_{\stage+1} - \phi_\stage) \kn}{(\phi_{\stage+1} - \phi_\stage)^4} \!\!\! \label{eqline:before_partial_exchange} \\ 
    &= \frac{2(\phi_{\stage+1} / \phi_\stage - 1) (\phi_{\stage+1} - \phi_\stage)}{(\phi_{\stage+1} - \phi_\stage)^3} \nonumber \\ 
    &\,\,\,- \frac{2\left(\phi_{\stage+1}^2 / \phi_\stage - 2 \phi_{\stage+1} + \betac \kn \mukcurprimeone / \muknextprimeone \right)}{(\phi_{\stage+1} - \phi_\stage)^3}, \label{eqline:after_partial_exchange}
\end{align}
where from \eqref{eqline:before_partial_exchange} to \eqref{eqline:after_partial_exchange} we used that $\mukcurone - \constcomm = \betac \mukcurprimeone$. Since by design it holds that $\phi_{\stage+1} - \phi_\stage > 0$, it suffices to show that
\begin{align*}
    \left(\frac{\phi_{\stage+1}}{\phi_\stage} - 1\right) &(\phi_{\stage+1} - \phi_\stage) \\
    &- \left(\frac{\phi_{\stage+1}^2}{\phi_\stage} - 2 \phi_{\stage+1} + \betac \kn \frac{\mukcurprimeone}{\muknextprimeone} \right) > 0.
\end{align*}
We simplify the expression and obtain
\begin{align*}
    & \begin{aligned}[t] \left(\frac{\phi_{\stage+1}}{\phi_\stage} - 1\right) &(\phi_{\stage+1} - \phi_\stage) \\
    &- \left(\frac{\phi_{\stage+1}^2}{\phi_\stage} - 2 \phi_{\stage+1} + \betac \kn \frac{\mukcurprimeone}{\muknextprimeone} \right) \end{aligned} \\
    &= \begin{aligned}[t] \phi_{\stage+1} &\left(\frac{\phi_{\stage+1}}{\phi_\stage} - 1\right) (1 - \frac{\phi_\stage}{\phi_{\stage+1}}) \\
    &- \phi_{\stage+1} \left(\frac{\phi_{\stage+1}}{\phi_\stage} - 2 + \frac{\kn}{\kc} \frac{\mukcurprimeone}{\muknextprimeone} \right) \end{aligned} \\
    &= \begin{aligned}[t] \phi_{\stage+1} &\left(\frac{\phi_{\stage+1}}{\phi_\stage} - \frac{\phi_\stage}{\phi_{\stage+1}} - 2\right) - \phi_{\stage+1} \left(\frac{\phi_{\stage+1}}{\phi_\stage} - 2\right) \\
    &+ \phi_{\stage} \left( \frac{\kn}{\kc} \frac{\mukcurprimeone}{\muknextprimeone} \right) \end{aligned} \\
    &= \phi_{\stage+1} \left(\frac{\phi_\stage}{\phi_{\stage+1}}\right) - \phi_{\stage} \left(\frac{\kn}{\kc} \frac{\mukcurprimeone}{\muknextprimeone} \right)  \\
    &= \phi_{\stage} \left(1-\frac{\kn}{\kc} \frac{\mukcurprimeone}{\muknextprimeone} \right) > 0,
\end{align*}
which is true since $\phi_\stage>0$ and
\begin{align*}
    1-\frac{\kn}{\kc} \frac{\mukcurprimeone}{\muknextprimeone} &> 0 \\
    \frac{\kn}{\kc} &< \frac{\muknextprimeone}{\mukcurprimeone} \\
    \frac{\kn}{\kc} &< \frac{\sum_{j=n-\kn+1}^n \frac{1}{j}}{\sum_{j=n-\kc+1}^n \frac{1}{j}} \\
    \frac{\kc + (\kn-\kc)}{\kc} &< \frac{\sum_{j=n-\kc+1}^n \frac{1}{j} + \sum_{j=n-\kn+1}^{n-\kc} \frac{1}{j}}{\sum_{j=n-\kc+1}^n \frac{1}{j}} \\
    1+ \frac{\kn-\kc}{\kc} &< 1+ \frac{\sum_{j=n-\kn+1}^{n-\kc}}{\sum_{j=n-\kc+1}^n \frac{1}{j}}, \\
    \frac{\kn-\kc}{\kc} &< \frac{\sum_{j=n-\kn+1}^{n-\kc} \frac{1}{j}}{\sum_{j=n-\kc+1}^{n} \frac{1}{j}}.
\end{align*}
This relation holds since $\sum_{j=n-\kn+1}^{n-\kc} \frac{1}{j}$ contains $\kn-\kc > 0$ summands, each of which are larger than each of the $\kc$ summands in $\sum_{j=n-\kc+1}^{n} \frac{1}{j}$. This concludes the proof for the concavity of the objective $O$ w.r.t $\betan$. To solve the optimization, we use the result of \cref{thm:optimal_beta} and the properties of the model given in \cref{def:simple_model} to solve
\begin{align*}
    & \betan^2 \kn \muknextprimeone - \betan \muknextprimeone \phi_\stage - \phi_\stage \left(\muknextone - \mukcurone\right) = \\
    &= \betan^2 \kn \muknextprimeone - \betan \muknextprimeone \phi_\stage - \phi_\stage \left(\betan \muknextprimeone + \constcomm - \mukcurone\right) \\
    &= \betan^2 \kn \muknextprimeone - 2 \betan \muknextprimeone \phi_\stage + \phi_\stage \betac \mukcurprimeone = 0,
\end{align*}
where we used that $\mukcurone - \constcomm = \betac \mukcurprimeone$. We divide by $\phi_\stage \muknextprimeone$ and obtain the following degree two polynomial in $\betan$ that we require to be zero:
\begin{align*}
    &\betan^2 \frac{\kn}{\phi_\stage} - 2 \betan \phi_\stage + \phi_\stage \betac \frac{\mukcurprimeone}{\muknextprimeone}
\end{align*}
Thus, the root of this polynomial that maximizes $O$ is the global optimizer, which concludes the proof.

\subsection{Proof of \cref{thm:order_stats}}
\label{app:proof_order_stats}
To proof \cref{thm:order_stats}, we first calculate the \ac{cdf} of $\rvcomb$.
\begin{align*}
    F_{\rvcomb} (z) &= \Pr(\rvcomm + \rvcomp \leq z) = \int_\mathbb{R} \int_{-\infty}^{z-x} f_{\rvcomm,\rvcomp} (x,y) dy dx \\
    &= \int_\mathbb{R} F_{\rvcomp} (z-x) f(x) dx \\
    &= \int_0^z (1-e^{-\comprate(z-x)}) \commrate e^{-\commrate x} dx \\
    &= \int_0^z \commrate e^{-\commrate x} - \commrate \int_0^z e^{x(\comprate - \commrate) - \comprate z} dx \\
    &= [-e^{-\commrate x}]^z_0 - \frac{\commrate}{\comprate - \commrate} [e^{x(\comprate - \commrate) - \comprate z}]_0^z \\
    &= -e^{-\commrate z} + 1 + \frac{\commrate}{\comprate - \commrate} (e^{-\comprate z} - e^{-\commrate z}).
\end{align*} 
Consider $\workercnt$ of these random variables, then the \ac{cdf} for their order statistics reads as
\begin{align*}
    &F_{\rvcomb(\waitcnt)}(z) = \sum_{\waitidx=\waitcnt}^\clustersize \binom{\clustersize}{\waitidx} (F_{\rvcomb}(z))^\waitidx (1-F_{\rvcomb}(z))^{\clustersize-\waitidx} \\
    &= \begin{aligned}[t]
    \sum_{\waitidx=\waitcnt}^\clustersize &\binom{\clustersize}{\waitidx} \left(1 - e^{-\commrate z} - \frac{\commrate}{\commrate - \comprate} (e^{-\comprate z} - e^{-\commrate z})\right)^\waitidx \cdot \\&\left(e^{-\commrate z} + \frac{\commrate}{\commrate - \comprate} (e^{-\comprate z} - e^{-\commrate z})\right)^{\clustersize-\waitidx} \end{aligned} \\
    &= \begin{aligned}[t] \sum_{\waitidx=\waitcnt}^\clustersize \binom{\clustersize}{\waitidx} \sum_{\tmpa=0}^\waitidx \binom{\waitidx}{\tmpa} &(-1)^\tmpa \Big(e^{-\commrate z} + \\ 
    &+ \frac{\commrate}{\commrate - \comprate} (e^{-\comprate z} - e^{-\commrate z})\Big)^{\tmpa + \clustersize-\waitidx} \end{aligned} \\
    &= \begin{aligned}[t] \sum_{\waitidx=\waitcnt}^\clustersize &\binom{\clustersize}{\waitidx} \sum_{\tmpa=0}^\waitidx \binom{\waitidx}{\tmpa} (-1)^\tmpa  \sum_{\tmpb=0}^{\tmpa+\clustersize-\waitidx} \binom{\tmpa+\clustersize-\waitidx}{\tmpb} (e^{-\commrate z})^{\tmpa + \clustersize - \waitidx - \tmpb}  \cdot \\
    &\left(\frac{\commrate}{\commrate - \comprate}\right)^\tmpb (e^{-\comprate z} - e^{-\commrate z})^{\tmpb} \end{aligned} \\
    &= \begin{aligned}[t] \sum_{\waitidx=\waitcnt}^\clustersize &\binom{\clustersize}{\waitidx} \sum_{\tmpa=0}^\waitidx \!\! \binom{\waitidx}{\tmpa} (-1)^\tmpa \sum_{\tmpb=0}^{\tmpa+\clustersize-\waitidx} \binom{\tmpa+\clustersize-\waitidx}{\tmpb} (e^{-\commrate z})^{\tmpa + \clustersize - \waitidx - \tmpb}  \cdot \\
    &\left(\frac{\commrate}{\commrate - \comprate}\right)^\tmpb \sum_{\tmpc=0}^\tmpb \binom{\tmpb}{\tmpc} (e^{-\comprate z})^{\tmpb-\tmpc} (-1)^\tmpc (e^{-\commrate z})^{\tmpc} \end{aligned} \\
    &= \begin{aligned}[t] \sum_{\waitidx=\waitcnt}^\clustersize &\sum_{\tmpa=0}^\waitidx \sum_{\tmpb=0}^{\tmpa+\clustersize-\waitidx} \sum_{\tmpc=0}^\tmpb \binom{\clustersize}{\waitidx} \binom{\waitidx}{\tmpa}  \binom{\tmpa+\clustersize-\waitidx}{\tmpb} \binom{\tmpb}{\tmpc} (-1)^{\tmpa+\tmpc} \cdot \\
    &\left(\frac{\commrate}{\commrate - \comprate}\right)^\tmpb  e^{-z(\commrate(\tmpa+\clustersize-\waitidx-\tmpb+\tmpc) + \comprate(\tmpb-\tmpc))}. \end{aligned} \\
\end{align*}
Next, we let $\alpha \define \commrate(\tmpa+\clustersize-\waitidx-\tmpb+\tmpc) + \comprate(\tmpb-\tmpc)$, and calculate the expectation of the order statistics of $\clustersize$ random variables as
\begin{align*}
    \mu_{\waitcnt:\clustersize} &= \mathbb{E}[F_{\rvcomb(\waitcnt)}(z)] = \int_0^{\infty}  z \frac{\partial {F_{\rvcomb(\waitcnt)(z)}}}{\partial z} dz \\
    &= \begin{aligned}[t] \sum_{\waitidx=\waitcnt}^\clustersize &\sum_{\tmpa=0}^\waitidx \sum_{\tmpb=0}^{\tmpa+\clustersize-\waitidx} \sum_{\tmpc=0}^\tmpb \binom{\clustersize}{\waitidx} \binom{\waitidx}{\tmpa}  \binom{\tmpa+\clustersize-\waitidx}{\tmpb} \binom{\tmpb}{\tmpc} \\ &(-1)^{\tmpa+\tmpc} \frac{\commrate^\tmpa (-\alpha)}{(\commrate - \comprate)^\tmpa} \int_0^\infty z e^{-z \alpha} \end{aligned} \\
    &= \begin{aligned}[t] \sum_{\waitidx=\waitcnt}^\clustersize &\sum_{\tmpa=0}^\waitidx \sum_{\tmpb=0}^{\tmpa+\clustersize-\waitidx} \sum_{\tmpc=0}^\tmpb \binom{\clustersize}{\waitidx} \binom{\waitidx}{\tmpa}  \binom{\tmpa+\clustersize-\waitidx}{\tmpb} \binom{\tmpb}{\tmpc} \\ &(-1)^{\tmpa+\tmpc+1} \frac{\commrate^\tmpa \alpha}{(\commrate - \comprate)^\tmpa} \frac{1}{\alpha^2} \end{aligned} \\
    &= \begin{aligned}[t] \sum_{\waitidx=\waitcnt}^\clustersize &\sum_{\tmpa=0}^\waitidx \sum_{\tmpb=0}^{\tmpa+\clustersize-\waitidx} \sum_{\tmpc=0}^\tmpb \binom{\clustersize}{\waitidx} \binom{\waitidx}{\tmpa}  \binom{\tmpa+\clustersize-\waitidx}{\tmpb} \binom{\tmpb}{\tmpc} \\ &(-1)^{\tmpa+\tmpc+1} \frac{\commrate^\tmpa}{(\commrate - \comprate)^\tmpa \alpha}. \end{aligned}
\end{align*}
To conclude the proof, we add to $\mu_{\waitcnt:\clustersize}$ the deterministic shift of the shifted exponential random variables $\rvcomp$ and $\rvcomm$.

\subsection{Additional Experiments} \label{app:additional_simulations}

We provide an additional set of experiments with the same setup used in \cref{sec:simulations}. To model the communication and computation delays, we use the delay model introduced in \cref{def:general_model}. In those experiments, we compare our scheme to the ones in \cite{Dutta2018} and \cite{KasHanna2020}. For the scheme of \cite{Dutta2018}, we chose three parameter sets $(\batchcnt,\beta) \in \{(1,0.2),(5,1),(10,1)\}$, such that the performance for all other values of $(\batchcnt,\beta)$ fall in between the extremes.

To instantiate the savings brought by our scheme, we choose three different regimes: 1) computation delays dominate over the communication delays; 2) computation and communication delays are comparable; and 3) communication delays dominate over the computation delays. We plot the convergence as a function of time for all schemes and all regimes.

\begin{figure}[H]
    \centering
    \includegraphics[width=\linewidth]{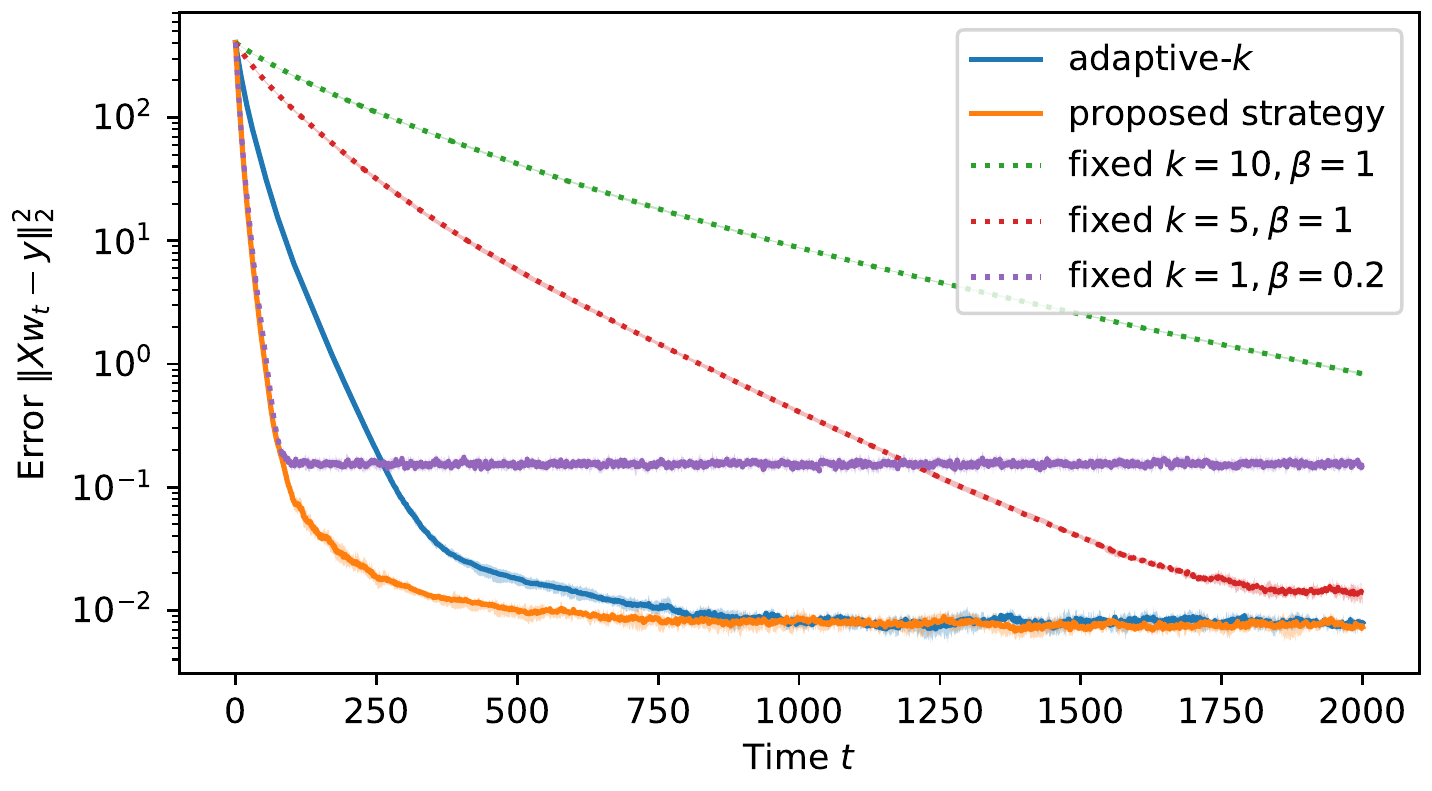}
    \caption{Error over time. $\lambda_y = 1$, $\lambda_x = 100$}
    \label{fig:error_sc}
\end{figure}

As shown by \cref{fig:error_sc}, the introduced strategy provides the best runtime improvements over the scheme in \cite{KasHanna2020} in cases where communication outweighs computation.

\begin{figure}[H]
    \centering
    \includegraphics[width=\linewidth]{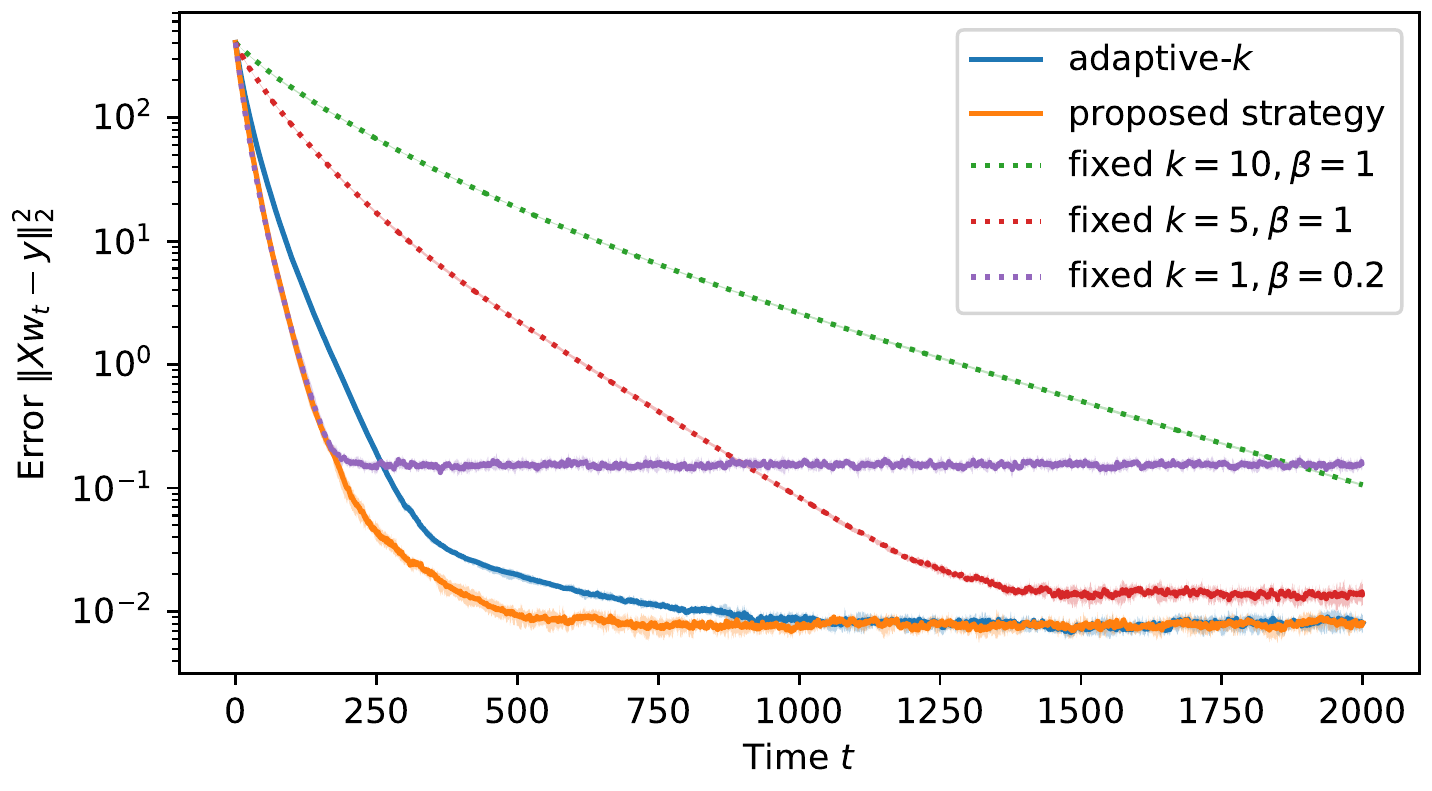}
    \caption{Error over time. $\lambda_y = 20$, $\lambda_x = 5/3$}
    \label{fig:error}
\end{figure}

\cref{fig:error} provides results for a parameter set in which communication efforts are time-wise comparable to computation spendings. Despite computation being still by a factor of $12$ faster than communication, we can observe notable improvements in terms of runtime.

\begin{figure}[H]
    \centering
    \includegraphics[width=\linewidth]{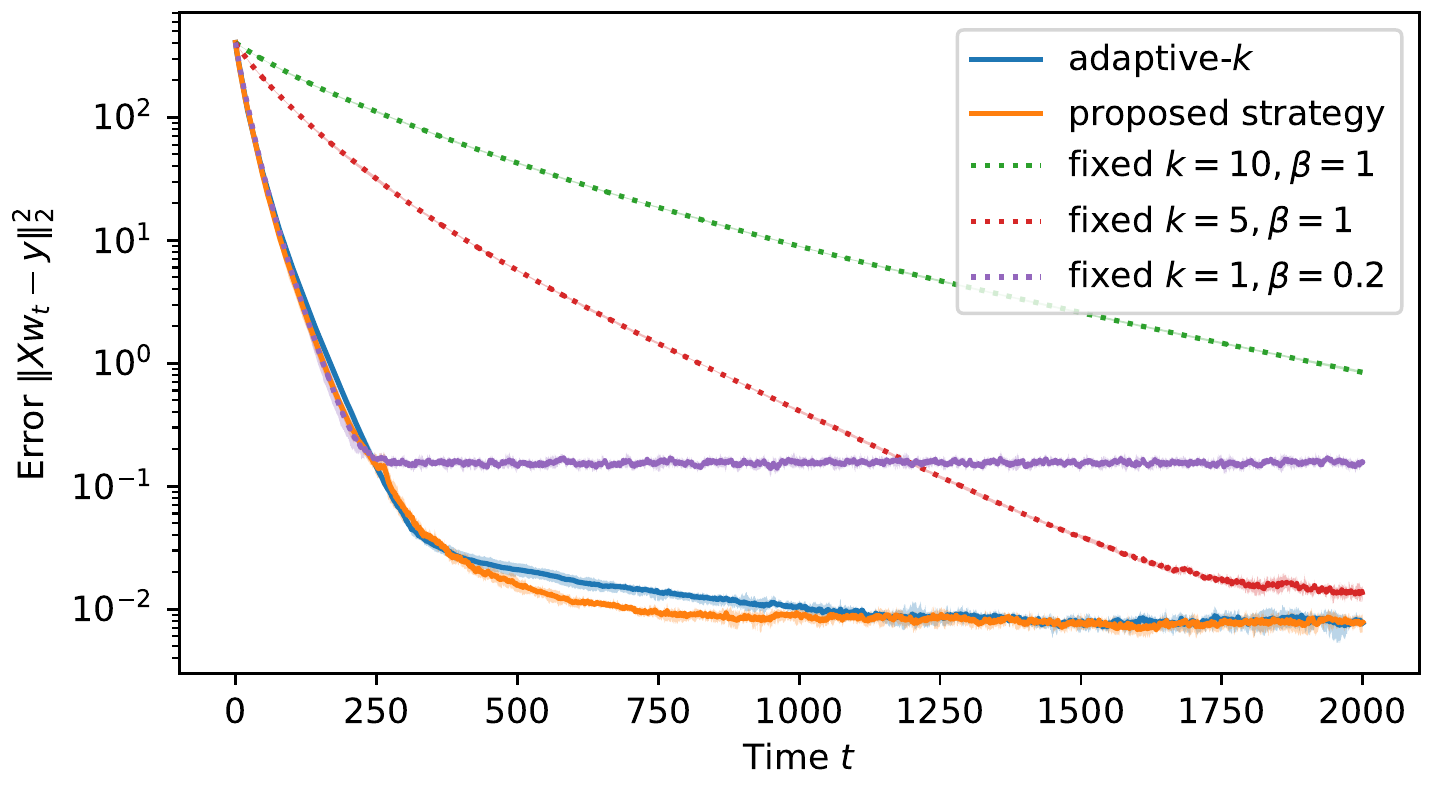}
    \caption{Error over time. $\lambda_y = 100$, $\lambda_x = 1$}
    \label{fig:error_lc}
\end{figure}

Only when computation is significantly faster, i.e., by factor $100$ in \cref{fig:error_lc}, the introduced algorithm does not improve the runtime. Hence, as predicted by the theory, we observe that our scheme enjoys the highest speed up in the first regime and no speed up in the third one.

\begin{figure}[H]
    \centering
    \includegraphics[width=\linewidth]{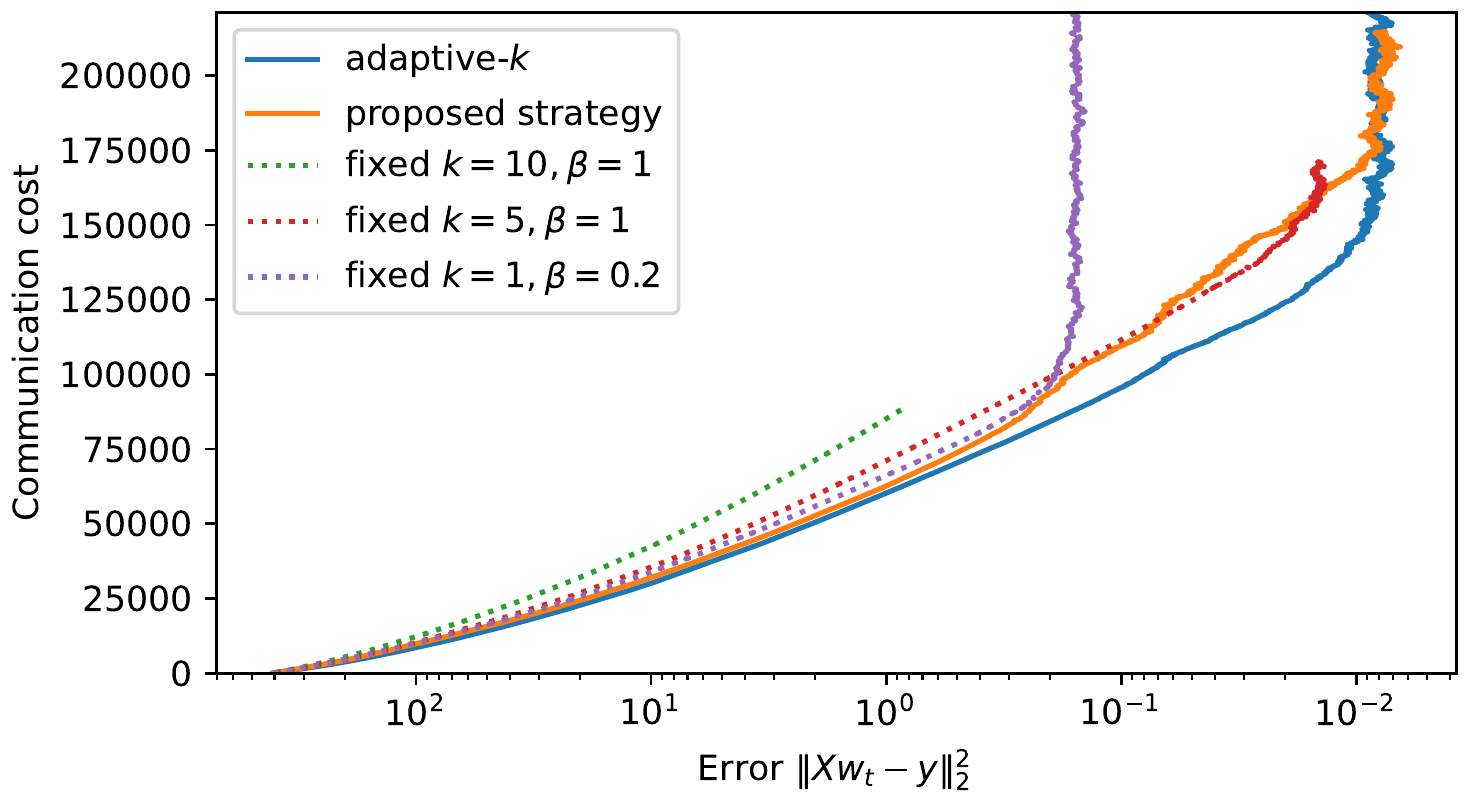}
    \caption{Communication cost over error.}
    \label{fig:comm_sc}
\end{figure}

We further analyze the communication (cf. \cref{fig:comm_sc}) and computation cost (cf. \cref{fig:comp_sc}) for all schemes. For clarity of presentation, we only depict the results for the third regime in which communication costs are dominating (i.e., no runtime improvement compared to \cite{KasHanna2020}) and clip the computation and communication costs to $40000$ and $210000$, respectively. For the other two regimes, the same qualitative behavior is observed. For all regimes, our scheme has a higher communication cost and a lower computation cost compared to the schemes on \cite{Dutta2018} and \cite{KasHanna2020}.

\begin{figure}[H]
    \centering
    \includegraphics[width=\linewidth]{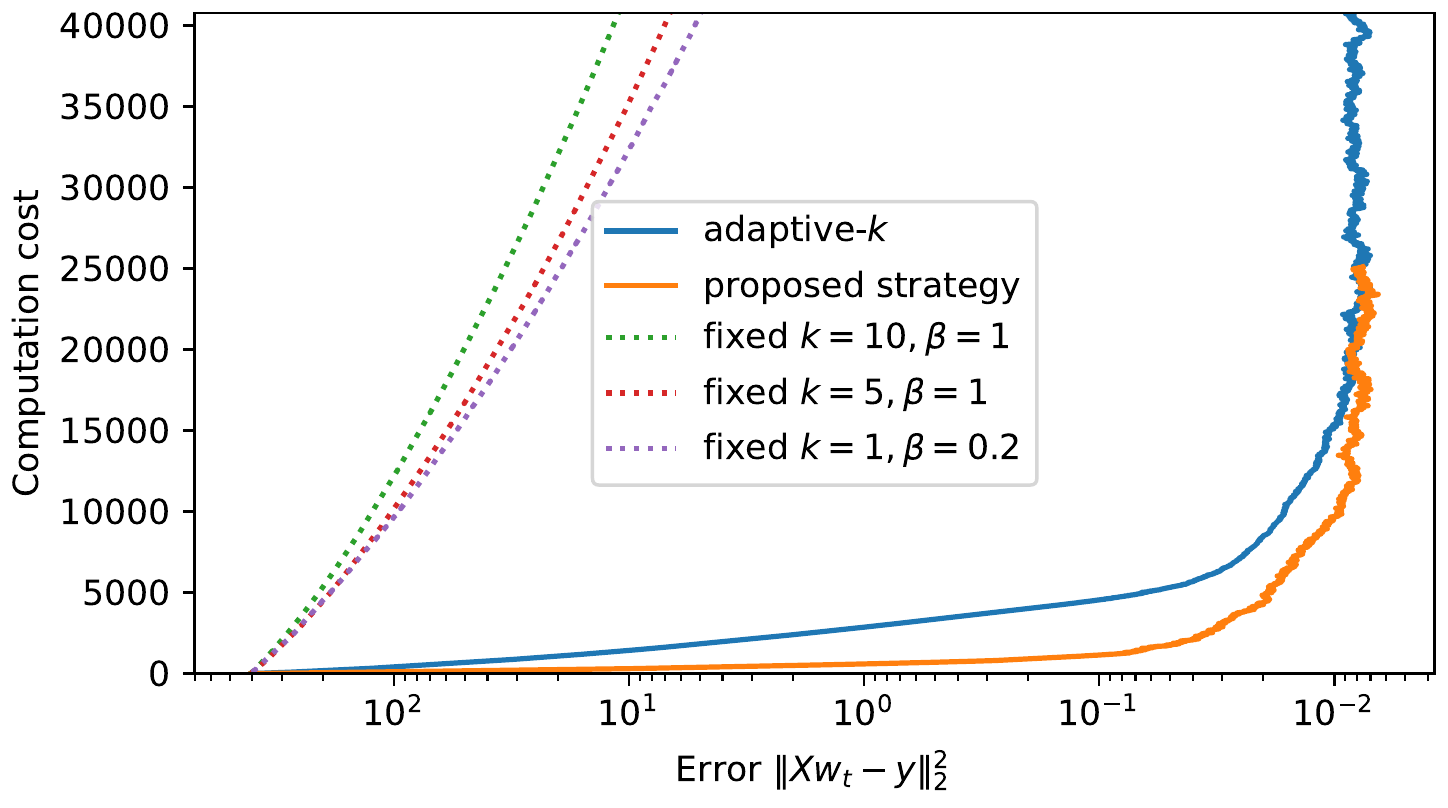}
    \caption{Computation cost over error.}
    \label{fig:comp_sc}
\end{figure}

\end{document}